\theoremstyle{definition}
\newtheorem{definition}{Definition}
\newtheorem{lemma}{Lemma}
\newtheorem{theorem}{Theorem}
\newtheorem{assumption}{Assumption}
\def\BibTeX{{\rm B\kern-.05em{\sc i\kern-.025em b}\kern-.08em
    T\kern-.1667em\lower.7ex\hbox{E}\kern-.125emX}}
\begin{document}
\title{Age-Based Device Selection and Transmit Power Optimization in Over-the-Air Federated Learning}
\author{Jingyuan Liu, \IEEEmembership{Graduate Student Member, IEEE}, Zheng Chang, \IEEEmembership{Senior Member, IEEE,} Ying-Chang Liang, \IEEEmembership{Fellow, IEEE} 
\thanks{Jingyuan Liu and Zheng Chang are with School of Computer Science and Engineering, University of Electronic Science and Technology of China, Chengdu 611731, China. Z. Chang is also with University of Jyv\"askyl\"a, P. O. Box 35, FIN-40014 Jyv\"askyl\"a, Finland. Y-C. Liang is with Center for Intelligent Networking and Communications (CINC), University of Electronic Science and Technology of China, 611731 Chengdu, China.}
		

}

\markboth{Journal of \LaTeX\ Class Files,~Vol.~14, No.~8, August~2021}%
{Shell \MakeLowercase{\textit{et al.}}: A Sample Article Using IEEEtran.cls for IEEE Journals}

\maketitle

\begin{abstract}

Recently, over-the-air federated learning (FL) has attracted significant attention for its ability to enhance communication efficiency. However, the performance of over-the-air FL is often constrained by device selection strategies and signal aggregation errors. In particular, neglecting straggler devices in FL can lead to a decline in the fairness of model updates and amplify the global model's bias toward certain devices' data, ultimately impacting the overall system performance. To address this issue, we propose a joint device selection and transmit power optimization framework that ensures the appropriate participation of straggler devices, maintains efficient training performance, and guarantees timely updates. First, we conduct a theoretical analysis to quantify the convergence upper bound of over-the-air FL under age-of-information (AoI)-based device selection. Our analysis further reveals that both the number of selected devices and the signal aggregation errors significantly influence the convergence upper bound. To minimize the expected weighted sum peak age of information, we calculate device priorities for each communication round using Lyapunov optimization and select the highest-priority devices via a greedy algorithm. Then, we formulate and solve a transmit power and normalizing factor optimization problem for selected devices to minimize the time-average mean squared error (MSE). Experimental results demonstrate that our proposed method offers two significant advantages: (1) it reduces MSE and improves model performance compared to baseline methods, and (2) it strikes a balance between fairness and training efficiency while maintaining satisfactory timeliness, ensuring stable model performance.

\end{abstract}

\begin{IEEEkeywords}
Federated learning, age-of-information, over-the-air computation, device selection, resource optimization.
\end{IEEEkeywords}

\section{Introduction}
\label{sec:Introduction}
\IEEEPARstart{W}{ith} the rapid spread of smart devices equipped with advanced sensing, computing, and communication capabilities, a vast amount of distributed data is being generated locally. This data provides numerous opportunities for machine learning-driven intelligent applications such as smart healthcare \cite{baker2023artificial}, Industrial Internet of Things (IIoT) \cite{tallat2023navigating}, and autonomous driving \cite{abkenar2022survey}. However, traditional centralized machine learning methods rely on transmitting all data to a central server for unified processing, which not only demands significant communication resources but also poses serious privacy risks due to the transmission of sensitive data \cite{liu2021machine}. To address these challenges, Federated learning (FL) has emerged as a novel distributed learning paradigm \cite{mcmahan2017communication}. FL leverages the computation resources of devices to perform local training directly on edge devices, eliminating the need to transfer raw data to a central server. This approach is effective in protecting user privacy while significantly reducing communication overhead.

Although FL mitigates privacy risks and reduces communication overhead to some extent, it still faces significant communication bottlenecks caused by the high-dimensional model updates uploaded by numerous devices to the server. Many researches have been devoted to solving this problem. For example, SCAFFOLD introduces control variates to correct the local update directions of each client, significantly reducing communication overhead while ensuring the convergence of the FL model \cite{karimireddy2020scaffold}. Techniques such as gradient sparsification and quantization, as discussed in \cite{vithana2023private,tang2022gossipfl,liu2022hierarchical,chen2022energy}, greatly reduce communication costs, while client selection optimization strategies, as proposed in \cite{xu2020client,nishio2019client}, maximize bandwidth utilization and training performance. These approaches typically adopt a "transmit-then-compute" strategy for FL, which encounters difficulties in supporting massive device access under limited radio resources. To address this challenge, a low-latency multi-access scheme called over-the-air computation (AirComp) has been proposed\cite{zhu2019broadband,yang2020federated}. AirComp leverages the signal superposition property of multiple access channels, integrating model transmission and aggregation into a single step, thereby significantly enhancing the communication efficiency of FL.

In the context of over-the-air FL, various factors influence the training process and the performance of the global model. Numerous researches have explored these issues. For instance, channel and power control is a critical research direction.\cite{zhang2021gradient} proposes a dynamic power control method based on gradient statistics, while \cite{cao2022transmission} optimizes transmit power control in over-the-air FL to reduce aggregation errors. \cite{ni2022integrating} integrates non-orthogonal multiple access (NOMA) technology into over-the-air FL, enhancing communication efficiency and system performance. Meanwhile, dynamic scheduling and signal transmission optimization have also received much attention.   \cite{shao2021federated} investigates the effect of signal misalignment in over-the-air FL, and \cite{sun2021dynamic} proposes a dynamic scheduling method to balance communication efficiency and model performance. Additionally, some works focus on learning rate optimization and algorithmic frameworks. \cite{xu2021learning} introduces a dynamic learning rate optimization scheme to adapt to wireless channel variations, and \cite{zou2022knowledge} optimizes transmitter design through knowledge-guided learning. Overall, these studies have optimized the communication efficiency and model performance of over-the-air FL from multiple dimensions.

Device selection is a key research problem in FL , which plays a crucial role in determining the efficiency and effectiveness of global model training. Existing device selection strategies can be broadly categorized into two types: fair selection and discarding stragglers. Fair selection involves the server randomly selecting a group of devices for training, waiting for all selected devices to complete their computations and upload their local models, and then performing aggregation before starting the next training round, as described in \cite{albaseer2023fair,li2021ditto,mcmahan2017communication}. While this approach ensures fairness, it suffers from inefficiencies caused by stragglers. In contrast, discarding stragglers aims to improve training efficiency by excluding devices that fail to complete their training within a predefined deadline, as described in \cite{liu2021fedcpf,wu2020accelerating,nishio2019client}. However, this method often results in weaker devices being consistently excluded, thereby compromising fairness and overall model performance. An effective device selection strategy must strike a balance between training efficiency and model fairness.

Although transmission optimization and device selection in over-the-air FL have been extensively studied in the aforementioned works, their integration poses new challenges \cite{shi2024analysis,guo2022joint,yang2020federated}. In AirComp, signal aggregation errors are influenced not only by wireless channel conditions but also by the number and characteristics of participating devices \cite{shi2024analysis}. For non-IID datasets, continuously selecting the same devices can lead to weight divergence in specific directions \cite{zhao2018federated}. Moreover, while involving more devices in training can enhance the fairness and coverage of the global model, it may exacerbate signal aggregation errors and degrade overall model performance. Therefore, a carefully designed device selection strategy is critical to mitigating signal aggregation errors while maintaining fairness and training efficiency.


Age-of-information (AoI) has been widely utilized as a metric for evaluating the timeliness of device updates in FL  \cite{wu2023joint,wang2024age,ma2023channel,dong2024age}. For each device, AoI is typically defined as the time elapsed since the device was last selected for training, making it an effective indicator of update freshness. The total AoI across the system serves two primary purposes: (1) it reflects the fairness of device selection. AoI captures the sampling frequency of each device, with high AoI values indicating that certain devices have been excluded from training for extended periods, which may reveal potential fairness issues; (2) it enhances training efficiency. Optimizing the system-wide AoI helps minimize delays caused by straggler devices, thereby improving the overall efficiency of training rounds. However, existing researches have not fully explored AoI in the context of over-the-air FL. 

To the best of our knowledge, this paper is the first to incorporate AoI into over-the-air FL device selection optimization, achieving a balance between signal aggregation error, model fairness, and training efficiency. The main contributions of this paper are summarized as follows:

\begin{itemize}
	\item We theoretically analyze the convergence properties of over-the-air FL under partial device participation, establishing the relationship between the number of participating devices and signal aggregation errors with the FL performance. Based on this analysis, we formulate optimization problems to minimize the expected weighted sum of peak AoI (EWS-PAoI) and the time-average mean squared error (MSE).
	\item To solve these optimization problems, we propose the FedAirAoI method, which first uses Lyapunov optimization to calculate the device priorities for each communication round and then applies a greedy algorithm to select the highest-priority devices. Subsequently, for the selected device set, we solve the transmit power and normalizing factor optimization problem to minimize the time-average MSE. By appropriately transforming the problem and leveraging the Karush-Kuhn-Tucker (KKT) conditions, we derive closed-form solutions.
	\item We evaluate the proposed FedAirAoI strategy on CIFAR-10 and CIFAR-100 datasets. Simulation results demonstrate that FedAirAoI achieves excellent performance in terms of fairness, training efficiency, and model accuracy.
	
\end{itemize}

The remainder of this paper is organized as follows.  Section II introduces the system model and presents the results of the FL convergence analysis. Based on these results, Section III formulates the optimization problems. Section IV describes the AoI-based device selection method in detail. Section V provides a comprehensive solution to the optimization problem of minimizing the time-average MSE. Section VI presents the simulation results, and Section VII concludes the paper.

%
%
%
%
%

\label{sec:Related Work}

\section{System Model}
\label{sec:System Model}
\begin{figure}[t]
	\centering
	\includegraphics[width=1\linewidth]{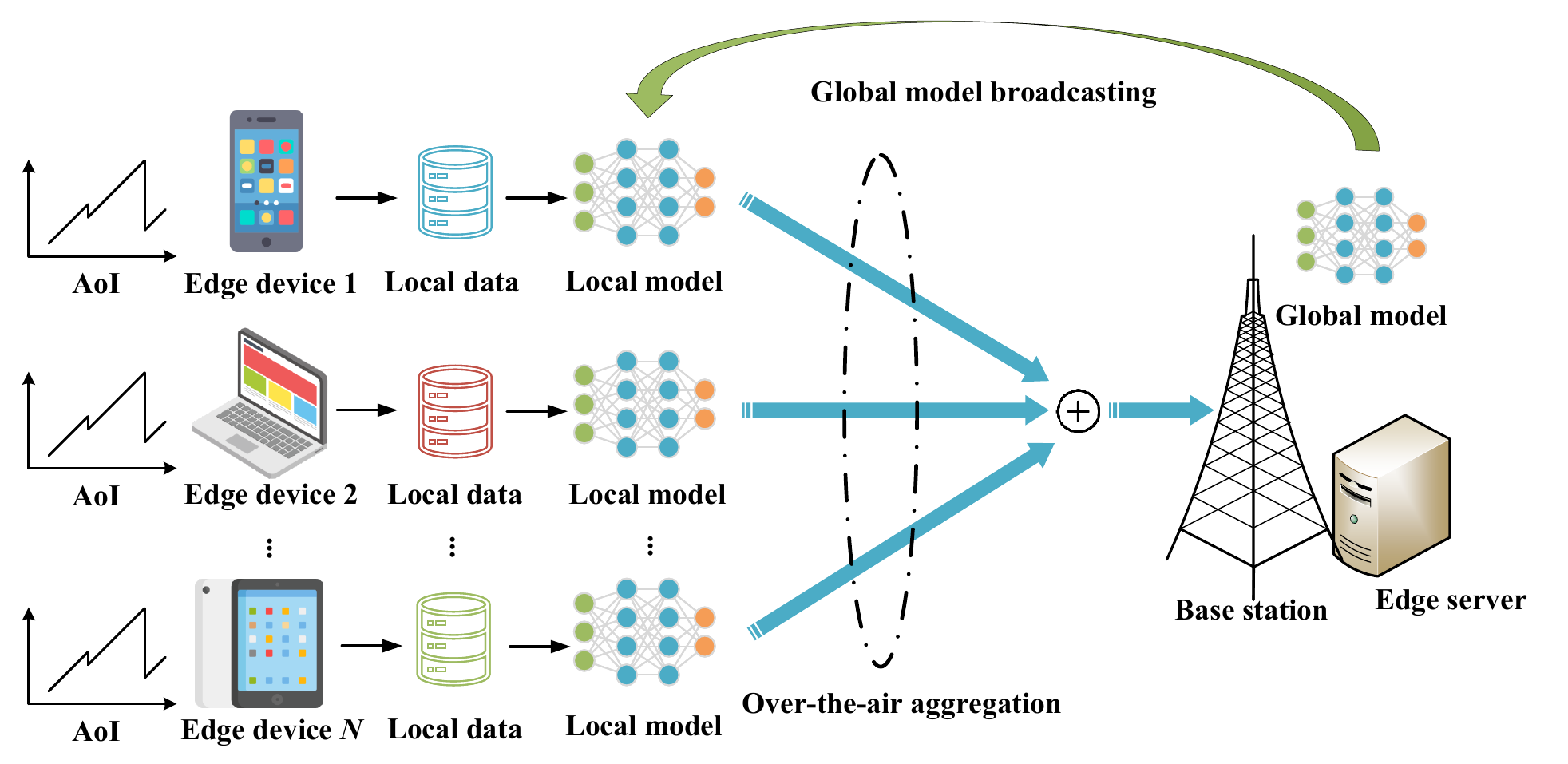}
	\caption{Over-the-air federated learning.}
	\label{fig:systemmodel}
\end{figure}
We consider an FL process in a single-cell wireless network, where an edge server at a single-antenna base station (BS) coordinates $N$ single-antenna edge devices to train a shared model. AirComp is used to aggregate model updates, enabling efficient communication, as shown in Fig. \ref{fig:systemmodel}.  In each communication round \(t\), a subset of \(K\) devices is selected from the total \(N\) devices to participate in the training process. Let \(\mathcal{N} = \{1, 2, \ldots, N\}\) represent the collection of all devices, and let \(\mathcal{S}^{(t)} \subseteq \mathcal{N}\) denote the set of \(K\) selected devices in round \(t\), where \( |\mathcal{S}^{(t)}| = K \). 

\subsection{Over-the-Air Federated Learning}

Let the global model parameter \( \boldsymbol{w}  \) represent the global weights to be optimized. Each edge device \( n \) possesses a local dataset \( \mathcal{D}_n \), consisting of \( |\mathcal{D}_n| \) data samples \( (\boldsymbol{x}_{n,i}, y_{n,i}) \), where \( \boldsymbol{x}_{n,i}  \) is the feature vector and \( y_{n,i} \) is the corresponding label. The local loss function for device \( n \) is defined as:
\begin{equation}
	F_n(\boldsymbol{w};\mathcal{D}_n)=\frac1{|\mathcal{D}_n|}\sum_{i=1}^{|\mathcal{D}_n|}f(\boldsymbol{w};(\boldsymbol{x}_{n,i},y_{n,i})).
\end{equation}

The global optimization objective of FL is defined as follows:
\begin{equation}
F(\boldsymbol{w},\mathcal{S}^{(t)}) = \sum_{n \in \mathcal{S}^{(t)}} q_n F_n(\boldsymbol{w};\mathcal{D}_n),
\end{equation}
where $q_n \ge 0$ denote the weight of the $n$-th device, such that $\sum_{n=1}^N q_n = 1$.

The local model update of device \( n \) at the \( \zeta \)-th iteration of communication round \( t \) is expressed as:
\begin{equation}
\boldsymbol{w}_n^{(t,\zeta+1)} = \boldsymbol{w}_n^{(t,\zeta)} - \lambda \tilde{\boldsymbol{g}}_n^{(t,\zeta)}, \quad \zeta = 0, \ldots, \phi - 1,
\end{equation}
where \( \lambda > 0 \) denotes the learning rate, controlling the step size of the gradient update, \( \tilde{\boldsymbol{g}}_n^{(t,\zeta)} \) is the gradient of the local loss function on device \( n \) during the \( \zeta \)-th iteration of round \( t \), defined as:
\begin{equation}
\begin{aligned}
\tilde{\boldsymbol{g}}_n^{(t,\zeta)}& =\nabla F_n(\boldsymbol{w}_n^{(t,\zeta)};\mathcal{B}_n^{(t,\zeta)}) \\
&=\frac{1}{|\mathcal{B}_n^{(t,\zeta)}|}\sum_{(\boldsymbol{x}_{\boldsymbol{n}},y_{n})\in\mathcal{B}_{n}^{(t,\zeta)}}\nabla f(\boldsymbol{w}_{n}^{(t,\zeta)};(\boldsymbol{x}_{\boldsymbol{n}},y_{n})),
\end{aligned}
\end{equation}
where \( \mathcal{B}_n^{(t,\zeta)} \subseteq \mathcal{D}_n \) is a mini-batch sampled from the local dataset \( \mathcal{D}_n \) of device \( n \),  and \( \nabla  \) represents the gradient operator, which computes the partial derivatives of a function with respect to its input variables.

The cumulative gradient update of device \( n \) in round \( t \) is given by a $d$-dimensional vector $\boldsymbol{\theta}_n^{(t)}$: 
\begin{equation}
	\boldsymbol{\theta}_n^{(t)} \in \mathbb{R}^d \triangleq \frac{\boldsymbol{w}_n^{(t, 0)} - \boldsymbol{w}_n^{(t, \phi)}}{\lambda} = \sum_{\zeta=0}^{\phi-1} \tilde{\boldsymbol{g}}_n^{(t, \zeta)}.
\end{equation}

The mean and variance of \( \boldsymbol{\theta}_n^{(t)} \) are defined as follows:
\begin{equation}
	\bar{\theta}_n^{(t)} = \frac{1}{d} \sum_{j=1}^{d} \theta_{n,j}^{(t)}.
\end{equation}
\begin{equation}
	(\pi_n^{(t)})^2 = \frac{1}{d} \sum_{j=1}^{d} \left(\theta_{n,j}^{(t)} - \bar{\theta}_n^{(t)}\right)^2.
\end{equation}

The weighted average of the cumulative gradient updates of all devices in round \( t \) is then calculated using the normalized weights:
\begin{equation}
	\bar{\theta}^{(t)} = \frac{N}{K}\sum_{n \in \mathcal{S}^{(t)}} {q}_n \bar{\theta}_n^{(t)},
\end{equation}

Similarly, the weighted variance of the cumulative gradient updates is defined as:
\begin{equation}
(\pi^{(t)})^2 = \frac{N}{K} \sum_{n \in \mathcal{S}^{(t)}} {q}_n (\pi_n^{(t)})^2,
\end{equation}

Finally, the normalized form of the cumulative gradient update \( \boldsymbol{\theta}_n^{(t)} \) for device \( n \) is given by:
\begin{equation}
\label{eq:zkt}
	\boldsymbol{z}_n^{(t)} = \frac{\boldsymbol{\theta}_n^{(t)} - \bar{\theta}^{(t)}}{\pi^{(t)}}.
\end{equation}

We consider block-fading channel model here, where the channel coefficients remain invariant within each communication round but vary independently from one round to another. In the \( t \)-th communication round, we denote the  channel gain of device $n$ as \( h_{n} ^{(t)} \). 
The estimated channel gain is given by:
\begin{equation}
|{h}_n^{(t)}|^2 = |v_n|^2 \varsigma (r_n^{(t)})^{-\vartheta} \ell^{-2}.
\end{equation}
Here, \(v_n\) is the small-scale fading coefficient, \(\varsigma\) is the frequency correlation coefficient, \(r_n^{(t)}\) represents the distance between device \(n\) and the server, \(\vartheta\) is the path loss exponent, and \(\ell^{-2}\) accounts for noise effects. 

To compensate for phase distortion due to channel fading, the transmitted signal \( \boldsymbol{z}_n^{(t)} \) is multiplied by a pre-processing factor \( \varpi_{n}^{(t)} \), defined as follows:
\begin{equation}
	\varpi_{n}^{(t)} = \sqrt{\alpha_{n}^{(t)}P_n^{\mathrm{max}}} \frac{(h_{n}^{(t)})^H}{|h_{n}^{(t)}|},
\end{equation}

\noindent where $\alpha_{n}^{(t)}\in[0,1]$ is the transmit power allocation coefficient  and \( P_n^{\mathrm{max}} \geq 0 \) denotes the maximum transmit power of device \( n \) in the \( t \)-th round. \( (h_{n}^{(t)})^H \) is the conjugate transpose of \( h_{n}^{(t)} \). This pre-processing ensures that the phase distortion introduced by the channel is countered, facilitating coherent signal combination at the receiver.

The received signal at the server, which is of dimension \( d \), can then be expressed as:
\begin{equation}
	\boldsymbol{y}^{(t)} = \sum_{n \in \mathcal{S}^{(t)}}  h_n^{(t)} \varpi_{n}^{(t)} \boldsymbol{z}_n^{(t)} + \boldsymbol{n}^{(t)},
\end{equation}

\noindent where \( \boldsymbol{n}^{(t)} \sim \mathcal{N}(\sigma^2 \boldsymbol{I}_d) \) denotes the additive white Gaussian noise (AWGN) vector, with \( \sigma^2 \) representing the noise variance. The AWGN is assumed to be independent and identically distributed (i.i.d.) across all elements of the noise vector. 

Upon receiving the aggregated signal \( \boldsymbol{y}^{(t)} \) at the server, normalization is applied to mitigate the effects of signal attenuation or scaling that occur during wireless transmission. The server uses a normalizing factor \( \eta^{(t)} \) to process the received signal, obtaining an estimate of the aggregated input signal \( \boldsymbol{z}^{(t)} \). The normalized signal is given by:
\begin{equation}
\hat{\boldsymbol{z}}^{(t)} = \frac{\boldsymbol{y}^{(t)}}{\sqrt{\eta^{(t)}}},
\end{equation}

\noindent where \( \eta^{(t)} > 0 \) is the normalization factor chosen to balance noise and signal misalignment errors. Substituting the expression for \( \boldsymbol{y}^{(t)} \), the normalized signal can be expanded as:
\begin{equation}
\hat{\boldsymbol{z}}^{(t)} = \sum_{n \in \mathcal{S}^{(t)}} \frac{\sqrt{\alpha_{n}^{(t)}P_n^{\mathrm{max}}} h_n^{(t)}}{\sqrt{\eta^{(t)}}} \boldsymbol{z}_n^{(t)} + \frac{\boldsymbol{n}^{(t)}}{\sqrt{\eta^{(t)}}}.
\end{equation}

To estimate the global gradient \( \boldsymbol{\theta}^{(t)} \), we first denormalize the received signal \( \hat{\boldsymbol{z}}^{(t)} \). The global gradient is defined as \( \boldsymbol{\theta}^{(t)} = \frac{N}{K} \sum_{n \in \mathcal{S}^{(t)}} q_n \boldsymbol{\theta}_{n}^{(t)} \), where \( q_1 = \cdots = q_N = \frac{1}{N} \) under the assumption of uniform weighting. Combined with (\ref{eq:zkt}), the estimated global gradient can be expressed as
\begin{equation}
\label{eq:egg}
\hat{\boldsymbol{\theta}}^{(t)} = \frac{1}{K} \pi^{(t)} \hat{\boldsymbol{z}}^{(t)} + \bar{\theta}^{(t)}.
\end{equation}
	
\noindent \eqref{eq:egg} can be further transformed into the following form:
	\begin{equation}
		\hat{\boldsymbol{\theta}}^{(t)} = \frac{1}{K} \pi^{(t)} (\hat{\boldsymbol{z}}^{(t)} - \boldsymbol{z}^{(t)}) + \boldsymbol{\theta}^{(t)}.
	\end{equation}

	The global model is then updated using the estimated global gradient. The update rule is given by
	\begin{equation}
	\boldsymbol{w}^{(t+1)}=\boldsymbol{w}^{(t)}-\lambda\hat{\boldsymbol{\theta}}^{(t)}=\boldsymbol{w}^{(t)}-\lambda(\boldsymbol{\theta}^{(t)}+\boldsymbol{e}^{(t)}),
	\end{equation}
\noindent where \( \boldsymbol{e}^{(t)} = \frac{1}{K} \pi^{(t)} (\hat{\boldsymbol{z}}^{(t)} - \boldsymbol{z}^{(t)}) \) represents the aggregation error in each communication round. 

\subsection{Computation and Communication Model}
At the local training stage, the computation time required by device \( n \) in round \( t \) is given by
\begin{equation}
\label{eq:comp}
	(T_{n}^\mathrm{cp})^{(t)}=\frac{\mu|\mathcal{D}_n|}{\tau_{n}^{(t)}C_n},
\end{equation}
where $\mu$ is the required central processing unit (CPU) cycles to train one sample, $\tau_{n}^{(t)}\in[0,1]$ is the computation resource allocation coefficient of device $n$ in round $t$, and $C_n$ is the available CPU cycles at device $n$, as described in \cite{wang2024age}. 

In over-the-air FL, the communication time is determined solely by the size of the local model \( D \) and the available bandwidth \( B \), as analog signals are used for transmission. This relationship can be expressed as 
\begin{equation}
\label{eq:com}
	T_{n}^\mathrm{cm}=\frac D{B}.
\end{equation}

For any device \( n \), the total time consumption in round \( t \) is given by:
\begin{equation}
\label{eq:t}
	T_{n}^{(t)}=(T_{n}^\mathrm{cp})^{(t)}+T_{n}^\mathrm{cm}.
\end{equation}
The completion time for round \( t \), \( T_c^{(t)} \), is determined by the device with the longest total time consumption among all the devices, and is expressed as:
\begin{equation}
T_c^{(t)}=\max_{n\in\mathcal{N}}\left\{\psi_{n}^{(t)}T_{n}^{(t)}\right\},
\end{equation}
where \( \psi_{n}^{(t)} \in \{0, 1\} \) is the device selection indicator. Specifically, \( \psi_{n}^{(t)} = 1 \) indicates that device \( n \) is selected in round \( t \), while \( \psi_{n}^{(t)} = 0 \) means the device is not selected.  


\subsection{Age of Information Model}
%
%
\begin{figure}[t]
	\centering
	\includegraphics[width=1\linewidth]{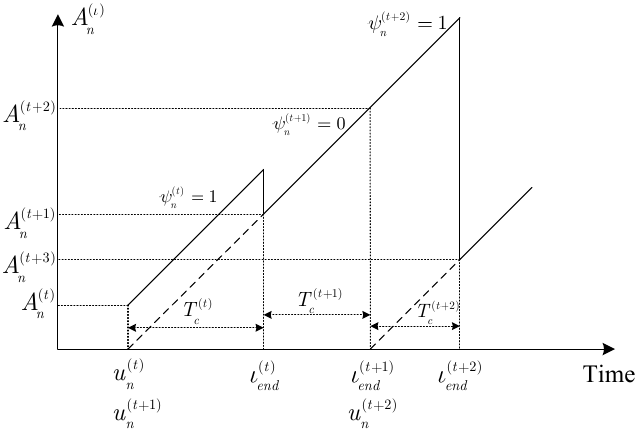}
	\caption{Variation of AoI and PAoI of device $n$ over time.}
	\label{fig:aoi}
\end{figure}

To evaluate the freshness of updates from each device in an FL system, we use the concept of the AoI. For any device \( n \) at a given elapsed time \( \iota \) within round \( t \), the AoI, denoted as \( A_n^{(\iota)} \), is defined as the elapsed time since the device was last selected to transmit an update. Thus, the AoI \( A_n^{(\iota)} \) for device \( n \) at time \( \iota \) within round \( t \) is given by:
\begin{equation}
	A_n^{(\iota)} = \iota - u_n^{(t)},
\end{equation}
where \( u_n^{(t)} \) represents the time of the last successful update for device \( n \), as shown in Fig. \ref{fig:aoi}.

At the end of round \( t \), the elapsed time is denoted by \( \iota_{\text{end}}^{(t)} \), and the AoI for device \( n \) at this point is expressed as:
\begin{equation}
	A_n^{(t+1)} = \iota_{\text{end}}^{(t)} - u_n^{(t)}.
\end{equation}

When the server receives an update from device \( n \) at the end of round \( t \), the AoI for the device is reset based on the duration of the round, \( T_c^{(t)} \), which represents the total time required to complete round \( t \). The updated PAoI \( A_n^{(t+1)} \) at the beginning of the next round is defined as:
\begin{equation}
	A_n^{(t+1)} = 
	\begin{cases} 
	A_n^{(t)} + T_c^{(t)}, & \text{if } \psi_n^{(t)} = 0, \\
	T_c^{(t)}, & \text{if } \psi_n^{(t)} = 1.
	\end{cases}
\end{equation}

For convenience, this expression can be reformulated in a recursive form as:
\begin{equation}
\label{eq:AoIt}
	A_n^{(t+1)} = (1 - \psi_n^{(t)}) A_n^{(t)} + T_c^{(t)}.
\end{equation}

This formulation effectively models the dynamics of AoI in the over-the-air FL system, capturing the impact of device selection on the freshness of information throughout the system.

\subsection{FL Convergence Analysis}
To facilitate the convergence analysis, we make the following definitions and assumptions on the loss function and gradients.
\begin{definition}
	\label{def:1}
	For device $n$ with local gradients $\{\nabla F_n(\boldsymbol{w})\}$, we define $\sigma_h^2$ to characterize the heterogeneity level of the local gradients as follows:
	\begin{equation}
	\|\nabla F_n(\boldsymbol{w})-\nabla F(\boldsymbol{w})\|^2\leq\sigma_h^2.
	\end{equation}
\end{definition}
\begin{assumption}
	\label{ass:1}
	For any parameter $\boldsymbol{w}$, the global loss function is lower bounded, i.e., $F(\boldsymbol{w})\geq$ $F(\boldsymbol{w}^*)>-\infty.$
\end{assumption}
\begin{assumption}
	\label{ass:2}
	The local loss function $F_n(\boldsymbol{w})$ is smooth with a non-negative
	constant $L$ and continuously differentiable, i.e.,
	\begin{equation}
	\|\nabla F_n(\boldsymbol{w})-\nabla F_n(\boldsymbol{w}')\|\leq L\|\boldsymbol{w}-\boldsymbol{w}'\|,\quad\forall\:\boldsymbol{w},\boldsymbol{w}'. \label{eq:smooth}
	\end{equation}
	
	Inequality (\ref{eq:smooth}) directly leads to the following inequality
    {\small
    \begin{equation}
	\label{eq:lsmooth}
	F_n(\boldsymbol{w}^{\prime})\leq F_n(\boldsymbol{w})+\langle\nabla F_n(\boldsymbol{w}),\boldsymbol{w}^{\prime}-\boldsymbol{w}\rangle+\frac L2\|\boldsymbol{w}-\boldsymbol{w}^{\prime}\|^2,\forall \boldsymbol{w},\boldsymbol{w}^{\prime}.
	\end{equation}}
	
\end{assumption}
\begin{assumption}
	\label{ass:3}
	The local mini-batch stochastic gradients $\{\tilde{\boldsymbol{g}}_n\}$ are assumed to be independent and unbiased estimates of the batch gradient $\{\nabla F_k(\boldsymbol{w}_n)\}$ with bounded variance, i.e.,
	\begin{equation}
	\begin{aligned}\mathbb{E}[\tilde{\boldsymbol{g}}_{n}]&=\nabla F_k(\boldsymbol{w}_n),\\\operatorname{Var}(\tilde{\boldsymbol{g}}_n)&=\mathbb{E}[\|\tilde{\boldsymbol{g}}_n-\nabla F_n(\boldsymbol{w}_n)\|^2]\leq\xi^2,\end{aligned}
	\end{equation}
	where $\xi\geq0$ is a constant introduced to quantify the sampling
	noise of stochastic gradients.
\end{assumption}
\begin{assumption}
	\label{ass:4}
	The variance of $d$ elements of $\boldsymbol{\theta}_n$ is upper bounded by a constant $\Gamma\geq0$, i.e., $\pi_n^2\leq\Gamma.$ 
\end{assumption}

\begin{assumption}
	\label{ass:41}
	The expected squared norm of the gradients is bounded by a constant \( G^2 \geq 0 \), i.e.,
	\begin{equation}
		\mathbb{E}\left[ \left\| \nabla F_n(\boldsymbol{w}^{(t)}) \right\|^2 \right] \leq G^2.
	\end{equation}

\end{assumption}

We consider cases where partial devices are involved. For the sake of the subsequent derivation, we assume that there is a virtual variable $\boldsymbol{v}$ that represents the global parameters in the case where all devices are involved.

\begin{assumption}
	\label{ass:5}
	The expected difference between $\boldsymbol{w}^{(t+1)}$ and $\boldsymbol{v}^{(t+1)}$ is bounded by \cite{Li2020On}
	\begin{equation}
	\mathbb{E}_{\mathcal{S}^{(t)}}\left\|\boldsymbol{w}^{(t+1)}-\boldsymbol{v}^{(t+1)}\right\|^2\leq\frac{N-K}{N-1}\frac4K\lambda \phi^2\sigma_h^2.
	\end{equation}
\end{assumption}

\begin{theorem}
	\label{theo:1}
	With Assumptions 1-6 and Definition \ref{def:1}, if learning rate $\lambda$ satisfes $\lambda\leq\min\{\frac1{2L\phi},\frac1{\sqrt{6L^2\phi^3}},\frac{\sqrt{\phi-1}}{4L\phi}\}$, then the time-average norm of the gradients after $T$ communication rounds is upper bounded by
	{\small
		\begin{equation}
		\label{eq:gap}
		\begin{aligned}
		\frac{1}{T}\sum_{t=0}^{T-1}\|\nabla F(\boldsymbol{w}^{(t)}&)  \|^2 \leq \underbrace{\frac{4(F(\boldsymbol{w}^{(0)})-F(\boldsymbol{w}^*))}{[\lambda(\phi-1)-4]T}}_{\text{Initial optimality gap}} \\
		& + \underbrace{\frac{4\phi(4\lambda^3L^2\phi^2+3\lambda)}{3[\lambda(\phi-1)-4]}\xi^2}_{\text{Variance of stochastic gradient}} \\
		& + \underbrace{\frac{16\lambda\phi^2 \sigma_h^2}{\lambda(\phi-1)-4} \left( \frac{(2 + L) (N - K)}{(N - 1) K} + \lambda^2 L^2 \phi \right)}_{\text{Gradient divergence and partial device participation}} \\
		& + \underbrace{\frac{2\lambda(1+2\lambda L)}{\lambda(\phi-1)-4} \frac{d\Gamma(K+1)}{K^2} \frac{1}{T}\sum_{t=0}^{T-1} \underbrace{\mathrm{MSE}^{(t)}}_{\text{Instantaneous MSE}}}_{\text{Time-average MSE}} \\
		& + \underbrace{\frac{4L^2 \lambda^2 }{\lambda(\phi-1)-4}G^2}_{\text{Impact of gradient norm upper bound}},
		\end{aligned}
		\end{equation}
	}
	\noindent where 
	\begin{equation}
	\label{eq:MSEt}
	\mathrm{MSE}^{(t)}=\sum_{n \in \mathcal{S}^{(t)}}  \left(\frac{ \sqrt{\alpha_{n}^{(t)}P_n^{\mathrm{max}}} h_{n}^{(t)}}{\sqrt{\eta^{(t)}}}-1\right)^2+\frac{\sigma^2}{\eta^{(t)}}.
	\end{equation}	
\end{theorem}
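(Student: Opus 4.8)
The plan is to follow the standard descent-lemma route for non-convex stochastic optimization, carefully separating the idealized gradient contribution from the over-the-air aggregation error. First I would invoke the $L$-smoothness inequality \eqref{eq:lsmooth} applied to the global objective $F$ together with the global update rule $\boldsymbol{w}^{(t+1)}=\boldsymbol{w}^{(t)}-\lambda(\boldsymbol{\theta}^{(t)}+\boldsymbol{e}^{(t)})$ to obtain a per-round inequality of the form $F(\boldsymbol{w}^{(t+1)})\leq F(\boldsymbol{w}^{(t)})-\lambda\langle\nabla F(\boldsymbol{w}^{(t)}),\boldsymbol{\theta}^{(t)}+\boldsymbol{e}^{(t)}\rangle+\frac{L\lambda^2}{2}\|\boldsymbol{\theta}^{(t)}+\boldsymbol{e}^{(t)}\|^2$. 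Taking the total expectation over the mini-batch sampling, the device set $\mathcal{S}^{(t)}$, and the channel noise, I would split the inner product and the squared norm into an ``ideal'' part driven by $\boldsymbol{\theta}^{(t)}$ and an ``error'' part driven by $\boldsymbol{e}^{(t)}$, handling the resulting cross terms with Young's inequality so that each piece can be bounded independently.

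The second block of the argument concerns the ideal aggregate. Because each $\boldsymbol{\theta}_n^{(t)}$ accumulates $\phi$ local stochastic gradients evaluated at drifting iterates, I would first use Assumption \ref{ass:3} to replace the stochastic gradients by their expectations at the cost of the variance term $\xi^2$, which generates the ``Variance of stochastic gradient'' contribution. I would then bound the client drift $\|\boldsymbol{w}_n^{(t,\zeta)}-\boldsymbol{w}^{(t)}\|$ by unrolling the local recursion and applying $L$-smoothness (Assumption \ref{ass:2}) and the gradient-norm bound $G^2$ (Assumption \ref{ass:41}), producing the various powers of $\lambda,\phi,L$ that appear throughout \eqref{eq:gap}. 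The partial-participation mismatch between $\boldsymbol{w}^{(t+1)}$ and the full-device virtual iterate $\boldsymbol{v}^{(t+1)}$ is absorbed directly through Assumption \ref{ass:5}, which supplies the $\frac{N-K}{(N-1)K}$ factor, while the gradient-heterogeneity bound $\sigma_h^2$ from Definition \ref{def:1} yields the ``Gradient divergence'' term.

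The heart of the proof, and the step I expect to be the main obstacle, is bounding the aggregation error $\mathbb{E}\|\boldsymbol{e}^{(t)}\|^2=\frac{(\pi^{(t)})^2}{K^2}\mathbb{E}\|\hat{\boldsymbol{z}}^{(t)}-\boldsymbol{z}^{(t)}\|^2$ and matching it to $\mathrm{MSE}^{(t)}$ with exactly the prefactor $\frac{d\Gamma(K+1)}{K^2}$. Here I would expand $\hat{\boldsymbol{z}}^{(t)}-\boldsymbol{z}^{(t)}=\sum_{n\in\mathcal{S}^{(t)}}\bigl(\tfrac{\sqrt{\alpha_n^{(t)}P_n^{\mathrm{max}}}\,h_n^{(t)}}{\sqrt{\eta^{(t)}}}-1\bigr)\boldsymbol{z}_n^{(t)}+\tfrac{\boldsymbol{n}^{(t)}}{\sqrt{\eta^{(t)}}}$, and control the per-coordinate second moment of the normalized signals $\boldsymbol{z}_n^{(t)}$ in \eqref{eq:zkt} so that $\mathbb{E}\|\boldsymbol{z}_n^{(t)}\|^2$ scales like $d$ after invoking the variance bound $\pi_n^2\leq\Gamma$ (Assumption \ref{ass:4}); the independence of the AWGN across the $d$ coordinates turns $\mathbb{E}\|\boldsymbol{n}^{(t)}\|^2/\eta^{(t)}$ into the noise contribution $\sigma^2/\eta^{(t)}$. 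Combining the misalignment and noise pieces reproduces the two summands of $\mathrm{MSE}^{(t)}$ in \eqref{eq:MSEt}, and tracking the $(K+1)/K^2$ combinatorial factor that emerges from the weighted-mean and weighted-variance definitions yields the stated coefficient. The delicate points are correctly accounting for the cross-correlations among the $\boldsymbol{z}_n^{(t)}$, which are coupled through the common $\bar{\theta}^{(t)}$ and $\pi^{(t)}$, and carrying the dimension factor $d$ consistently through the expansion.

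Finally, I would substitute all per-round bounds back into the descent inequality, telescope $\sum_{t=0}^{T-1}[F(\boldsymbol{w}^{(t)})-F(\boldsymbol{w}^{(t+1)})]$ down to $F(\boldsymbol{w}^{(0)})-F(\boldsymbol{w}^{(T)})\leq F(\boldsymbol{w}^{(0)})-F(\boldsymbol{w}^*)$ via the lower bound in Assumption \ref{ass:1}, and then isolate $\frac{1}{T}\sum_{t}\|\nabla F(\boldsymbol{w}^{(t)})\|^2$. The learning-rate conditions $\lambda\leq\min\{\frac{1}{2L\phi},\frac{1}{\sqrt{6L^2\phi^3}},\frac{\sqrt{\phi-1}}{4L\phi}\}$ are precisely what is needed to dominate the higher-order $\lambda$ terms arising from the drift analysis and to keep the coefficient multiplying the gradient-norm term controlled, so that dividing through by $\lambda(\phi-1)-4$ is legitimate and delivers \eqref{eq:gap}.
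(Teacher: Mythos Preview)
Your high-level skeleton---apply $L$-smoothness, separate the ideal aggregate $\boldsymbol{\theta}^{(t)}$ from the aggregation error $\boldsymbol{e}^{(t)}$, bound the local drift, bound $\mathbb{E}\|\boldsymbol{e}^{(t)}\|^2$ via $\mathrm{MSE}^{(t)}$, telescope, and use the step-size restrictions to clean up---matches the paper's overall route, and your discussion of how the $d\Gamma(K+1)/K^2$ factor arises is in line with what the paper packages as a lemma.

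Where your plan diverges, and where there is a real gap, is the treatment of partial participation and of the $G^2$ term. You write that the mismatch between $\boldsymbol{w}^{(t+1)}$ and the virtual full-participation iterate $\boldsymbol{v}^{(t+1)}$ is ``absorbed directly through Assumption~\ref{ass:5}'', but a single application of the descent inequality to $\boldsymbol{w}^{(t+1)}=\boldsymbol{w}^{(t)}-\lambda(\boldsymbol{\theta}^{(t)}+\boldsymbol{e}^{(t)})$ never produces a $\|\boldsymbol{w}^{(t+1)}-\boldsymbol{v}^{(t+1)}\|^2$ term, so Assumption~\ref{ass:5} has nothing to act on. The paper instead inserts the virtual iterate by writing the (tautological) decomposition
\[
F(\boldsymbol{w}^{(t+1)}) - F(\boldsymbol{w}^{(t)}) = \bigl[F(\boldsymbol{w}^{(t+1)}) - F(\boldsymbol{v}^{(t+1)})\bigr] + \bigl[F(\boldsymbol{v}^{(t+1)}) - F(\boldsymbol{w}^{(t+1)})\bigr] + \bigl[F(\boldsymbol{w}^{(t+1)}) - F(\boldsymbol{w}^{(t)})\bigr],
\]
applies $L$-smoothness separately to each bracket, and then uses a dedicated lemma to bound the two inner products $\langle\nabla F(\boldsymbol{v}^{(t+1)}),\boldsymbol{w}^{(t+1)}-\boldsymbol{v}^{(t+1)}\rangle$ and $\langle\nabla F(\boldsymbol{w}^{(t+1)}),\boldsymbol{v}^{(t+1)}-\boldsymbol{w}^{(t+1)}\rangle$ via Cauchy--Schwarz, Young's inequality, $L$-smoothness, and the relation $\boldsymbol{v}^{(t+1)}=\boldsymbol{w}^{(t)}-\lambda\boldsymbol{\Theta}^{(t)}$. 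This is the mechanism that simultaneously manufactures the $(2+L)\|\boldsymbol{w}^{(t+1)}-\boldsymbol{v}^{(t+1)}\|^2$ term (into which Assumption~\ref{ass:5} is substituted, giving the $(2+L)(N-K)/[(N-1)K]$ coefficient) \emph{and} the $L^2\lambda^2 G^2$ term. In particular, $G^2$ does not enter through the client-drift bound as you propose; the paper's drift lemma involves only $\xi^2$, $\sigma_h^2$, and $\|\nabla F(\boldsymbol{w}^{(t)})\|^2$, and Assumption~\ref{ass:41} is used solely to control $\|\boldsymbol{v}^{(t+1)}-\boldsymbol{w}^{(t)}\|^2=\lambda^2\|\boldsymbol{\Theta}^{(t)}\|^2\leq\lambda^2 G^2$ inside that inner-product lemma. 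Without this add-and-subtract trick (or an equivalent device that explicitly introduces $\boldsymbol{v}^{(t+1)}$), your argument cannot reproduce the stated constants, and the step ``absorbed directly through Assumption~\ref{ass:5}'' does not go through as written.
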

\begin{proof}
	See Appendix A for details.
\end{proof}

\noindent\textit{Remark.} The previous derivations in this section were based on the assumption that the sampling probabilities \( q_n \) are equal, i.e., \( q_{1}=\cdots=q_{N} = \frac{1}{N} \) for all devices. However, such an assumption may not reflect the practical selection process of devices in FL. Therefore, based on the results in \cite{Li2020On}, we extend our conclusions to scenarios where \( q_n \) may not be equal. Specifically, Theorem \ref{theo:1} still holds if the parameters \( L, \xi, \Gamma \) are replaced by  \( \tilde{L} = \nu L \), \( \tilde{\xi} = \sqrt{\nu} \xi \), and \( \tilde{\Gamma} = \sqrt{\nu} \Gamma \), where \( \nu = N \cdot \max_n q_n \).

\section{Problem Formulation}
\label{sec:Problem Formulation}
In this section, we formulate the optimization problem based on the FL convergence upper bound derived from Theorem \ref{theo:1}. The analysis of this convergence bound highlights two primary factors that impact model performance. First, the upper bound is affected by the number of devices \( K \) participating in FL training, indicating the necessity of a careful selection of the participating device set. Second, the convergence bound is influenced by the time-average MSE introduced in the analog gradient transmission, implying that minimizing the time-average MSE is essential to improve the convergence behavior.

To address the first factor, we introduce the EWS-PAoI as the optimization metric for device selection. This metric provides a way to quantify the overall freshness of the updates from each device over \( T \) training rounds. Specifically, by minimizing EWS-PAoI, we aim to ensure that the devices participating in each round are those whose updates are both timely and beneficial to the model's training process. This approach allows us to select devices strategically based on their data's relevance and timeliness, which in turn enhances model convergence.

\begin{definition}
	EWS-PAoI is defined as the weighted sum of all devices' PAoI over \( T \)-round training, expressed as:
	\begin{equation}
	\label{eq:EWS-PAoI}
	\mathbb{E} \{J\} = \mathbb{E} \left\{ \frac{1}{T N} \sum_{t=1}^T \sum_{n=1}^N q_n A_n^{(t)} \bigg| \boldsymbol{q}, \mathbf{A}^{(1)} \right\}, 
	\end{equation}
	where \( \mathbf{A}^{(t)} = [A_1^{(t)}, A_2^{(t)}, \dots, A_N^{(t)}] \) denotes the PAoI of all $N$ devices in round $t$ (at the beginning of round $t$) and \( \boldsymbol{q} = [q_1, q_2, \dots, q_N] \)  denotes the weight vector of all devices, satisfying \( q_n \in [0,1] \) and \( \sum_{n=1}^N q_n = 1 \).  The weights reflect the relative importance of each device based on the characteristics of their datasets.
\end{definition}

Thus, the problem of PAoI-based device selection for FL can be formulated as an optimization problem \textbf{P1} as follows, where \( \psi_n^{(t)} \) represents the decision variable:
\begin{align}
	\textbf{P1: } & \min\limits_{\psi_n^{(t)}} \mathbb{E} \{ J \} \label{p1}\\
	\text{s.t. } & \psi_n^{(t)} \in \{0, 1\}, \quad \forall n \in \mathcal{N}, \ \forall t, \tag{\ref{p1}{a}}\\
	& A_n^{(t+1)} = (1 - \psi_n^{(t)}) A_n^{(t)} + T_c^{(t)}, \quad \forall n \in \mathcal{N}, \ \forall t \tag{\ref{p1}{b}}.
\end{align}

Next, we discuss the second factor, the time-average MSE due to the analog gradient transmission. The MSE contributes directly to the convergence bound and impacts the accuracy of aggregated gradients. By minimizing this error, we can tighten the convergence upper bound (\ref{eq:gap}), resulting in better training performance. To formalize this objective, we rewrite the time-average MSE (\ref{eq:MSEt}) as:
\begin{equation}
\label{eq:avgMSE}
	\overline{\mathrm{MSE}}=\frac{1}{T}\sum_{t=0}^{T-1}\left[\sum_{n \in \mathcal{S}^{(t)}}  \left(\frac{ \sqrt{\alpha_{n}^{(t)}P_n^{\mathrm{max}}} h_{n}^{(t)}}{\sqrt{\eta^{(t)}}}-1\right)^2+\frac{\sigma^2}{\eta^{(t)}}\right].
\end{equation}

Specifically, our objective is to jointly optimize the transmit power allocated to each device and the receive normalizing factor at the server side, ensuring that the gradients received by the server are as accurate as possible while avoiding energy waste of the devices. The optimization problem \textbf{P2} can be formulated as follows:
\begin{align}
\textbf{P2: } &\min_{\alpha_{n}^{(t)}, \eta^{(t)}} \    \overline{\mathrm{MSE}} \label{p2}\\
\text{s.t. } & \alpha_{n}^{(t)}\in[0,1],\  \forall n\in\mathcal{N}, \forall t, \tag{\ref{p2}{a}} \\
& 0\leq\frac1T\sum_{t=0}^{T-1}\alpha_{n}^{(t)}P_n^{\mathrm{max}}\leq\bar{P}_n,\ \forall n\in\mathcal{N},  \tag{\ref{p2}{b}}\\
& \eta^{(t)}\geq0,\ \forall t \tag{\ref{p2}{c}},
\end{align}

Based on the observations of \textbf{P1} and \textbf{P2}, there is no coupling of variables between these two problems, allowing them to be optimized independently. We begin by solving \textbf{P1} to determine the selected device set. Next, we perform the optimization for \textbf{P2} based on the selected device set, obtaining the transmit power allocation and the receive normalizing factor. It is worth noting that \textbf{P1} and \textbf{P2} are solved sequentially, with \textbf{P2} taking the device set determined by \textbf{P1} as its input.

\section{Age of Information Based Device Selection}
The optimization problem \textbf{P1} is formulated as a combinatorial optimization problem (COP) for device selection with discrete decision variables. Given the NP-hard nature of this problem, the search space expands exponentially as the number of devices \( N \) and training rounds \( T \) increase. Moreover, there is an inherent coupling between the device selection decisions and the completion time of each FL training round, introducing additional complexity to solving \textbf{P1} using traditional COP methods. Inspired by the analytical approaches of \cite{dong2024age} and \cite{kadota2018scheduling}, the original problem can be transformed to optimize the lower bound of \(\lim_{T \to \infty} \mathbb{E}\{J\}\). This, in turn, allows the problem to be decomposed into two sub-problems: calculating the priority of each device and determining the optimal number of devices to select based on their priority scores. Adopting this approach enables a more efficient solution for device selection in each FL training round.
\subsection{Determination of Device Priority in Each Round}
Lyapunov optimization can be used to solve this sub-problem \cite{kadota2018scheduling}. To apply it to our model, we first define some relevant terms. The objective is to obtain a value of \( \mathbb{E}\left\{J\right\} \) as low as possible, which is referred to as the desired system state. The device selection in each round is called a decision-making stage. The current PAoI values before making the selection decision are regarded as the current system state. The purpose of Lyapunov optimization is to push the system towards the desired state at each decision-making stage. In order to naturally quantify the degree of deviation from the system state, a quadratic Lyapunov function is defined as follows:
\begin{equation}
	L\left(\mathbf{A}^{(t)}\right)=\frac1N\sum_{n=1}^Nq_n\Big(A_n^{(t)}\Big)^2.
\end{equation}

The one-frame Lyapunov drift is:
\begin{equation}
	\Delta\left(\mathbf{A}^{(t)}\right)=\mathbb{E}\left\{L\left(\mathbf{A}^{(t+1)}\right)-L\left(\mathbf{A}^{(t)}\right)\right\}.
\end{equation}

The Lyapunov drift represents the change in the system state from the current stage to the next stage, indicating how the expected system state evolves. Minimizing the drift reduces the degree of deviation from the desired system state in each round, enabling the system to gradually approach the desired state. Based on this theory, we use the concept of device priority to measure each device’s potential contribution to optimizing the system state. High-priority devices are considered more beneficial in the current round for reducing the PAoI, so selecting these devices preferentially can more effectively minimize the Lyapunov drift, thereby guiding the system toward the desired low-PAoI state.

\begin{definition}
	\label{def:3}
	The priority of device $n$ at round $t$ is defined as follows:
	\begin{equation}
		\varphi_n^{(t)} = \frac{q_n A_n^{(t)}}{T_{n}^{(t)}},
	\end{equation}
	where $q_n$ and $A_n^{(t)}$ denote the weight and PAoI value of the device $n$, respectively.  $T_{n}^{(t)}$ denotes the time consumed by device $n$ in round $t$. We assume that the server can obtain the computation resource allocation coefficient $\tau_{n}^{(t)}$ of device $n$ in round $t$. Then, the total time consumption $T_{n}^{(t)}$ of device $n$ in round $t$ can be obtained by  (\ref{eq:comp}), (\ref{eq:com}) and (\ref{eq:t}).
\end{definition}

\noindent\textit{Remark.} The priority setting defined in Definition \ref{def:3} helps to avoid the selection of devices with high PAoI but limited computation power. Since the total time consumption \(T_n^{(t)} \) of such devices may be higher, their priority \(\varphi_n^{(t)} \) is consequently lower. Thus, such a priority setting balances the freshness of information  and computation efficiency of the system.

\subsection{Determination of the Number of Participating Devices in Each Round}
The previous subsection concluded that selecting high-priority devices can minimize the Lyapunov drift in the current round, thereby pushing the system toward the desired low-PAoI state. This subsection thus focuses on determining the optimal number of high-priority devices to select in the current round, with the goal of minimizing the weighted sum PAoI (WS-PAoI) in round \( t+1 \). The analysis will start by examining the form of WS-PAoI in round \( t+1 \). Based on (\ref{eq:EWS-PAoI}) and (\ref{eq:AoIt}), the WS-PAoI for round $t+1$ is as follows:
\begin{equation}
\label{eq:WS-PAoI}
	\begin{aligned}
	r^{(t+1)}& =\frac{1}{N}\sum_{n=1}^{N}q_{n}A_n^{(t+1)} \\
	&=\frac{1}{N}\sum_{n=1}^{N}q_{n}T_{c}^{(t)}+\frac{1}{N}\sum_{n=1}^{N}q_{n}(1 - \psi_n^{(t)}) A_n^{(t)} \\
	&=\frac1NT_c^{(t)}+\frac1N\sum_{n\notin\mathcal{S}^{(t)}}q_n A_n^{(t)}.
	\end{aligned}
\end{equation}

Let $\mathcal{U}^{(t)}$ denote the set of the top \( K^{(t)} \) highest-priority devices. Since these devices are prioritized for selection, (\ref{eq:WS-PAoI}) can be rewritten as:
\begin{equation}
\label{eq:ut}
	r^{(t+1)}=\frac1NT_c^{(t)}+\frac1N\sum_{n\notin\mathcal{U}^{(t)}}q_n A_n^{(t)}.
\end{equation}

The objective of the problem is to find the optimal \( K^{(t)} \) (denoted by \( K_{\text{opt}}^{(t)} \)) such that the WS-PAoI is minimized, i.e.
\begin{equation}
	K_{\text{opt}}^{(t)}=\underset{K_{\text{opt}}^{(t)}}{\operatorname*{\arg\min}}\left\{r^{(t+1)}\right\}.
\end{equation}

Then, we analyze the two terms in (\ref{eq:ut}). Regarding the first term, it is determined by the longest completion time of the devices in $\mathcal{U}^{(t)}$. We can define a function $T_{c}^{(t)}=\Psi\left(K^{(t)}|\varphi_n^{(t)}\right)$ to represent this relationship. For the second term, it can be divided into two parts based on the completion time of devices not in $\mathcal{U}^{(t)}$. Thus,   (\ref{eq:ut}) can be transformed into:
\begin{equation}
\label{eq:vw}
	r^{(t+1)}=\frac1NT_c^{(t)}+\frac1N\sum_{n\in\mathcal{V}^{(t)}}q_nA_n^{(t)}+\frac1N\sum_{n\in\mathcal{W}^{(t)}}q_nA_n^{(t)},
\end{equation}
where $\mathcal{V}^{(t)}$ represents those not in \( \mathcal{U}^{(t)} \) but with completion time not exceeding \( \Psi\left(K^{(t)}|\varphi_n^{(t)}\right) \). In contrast, $\mathcal{W}^{(t)}$ represents those not in \( \mathcal{U}^{(t)} \) and with completion time greater than \( \Psi\left(K^{(t)}|\varphi_n^{(t)}\right) \). Their formal definitions are as follows:
\begin{equation}
	\mathcal{V}^{(t)}=\left\{n\left|T_{n}^{(t)}\leq\Psi\left(K^{(t)}|\varphi_n^{(t)}\right),n\notin\mathcal{U}^{(t)},n\in[1,N]\right.\right\},
\end{equation}
\begin{equation}
\label{eq:W}
\mathcal{W}^{(t)}=\left\{n\left|T_{n}^{(t)}>\Psi\left(K^{(t)}|\varphi_n^{(t)}\right),n\notin\mathcal{U}^{(t)},n\in[1,N]\right.\right\}.
\end{equation}

We observe that including the devices in \( \mathcal{V}^{(t)} \) as part of the selected device set does not impact the first term in equation (\ref{eq:vw}). Specifically, the completion time of the devices in \( \mathcal{V}^{(t)} \) will not exceed \( \Psi\left(K^{(t)}|\varphi_n^{(t)}\right)\), which leaves \( T_c^{(k)} \) unchanged. By including \( \mathcal{V}^{(t)} \) as part of the selected device set, the second term is effectively eliminated, resulting in the following simplified form: 
\begin{equation}
\label{eq:simp}
r^{(t+1)}=\frac1NT_c^{(t)}+\frac1N\sum_{n\in\mathcal{W}^{(t)}}q_nA_n^{(t)},
\end{equation}

To minimize (\ref{eq:simp}), an intuitive approach is to calculate \( r^{(t+1)} \) for all values of \( K^{(t)} \in [1, N] \) and identify the optimal \( K^{(t)} \). To further reduce the search space of the algorithm, we will examine the property of the function \( T_{c}^{(t)} = \Psi\left(K^{(t)} | \varphi_n^{(t)}\right) \), which can help in excluding some non-optimal solutions.

\begin{algorithm}[t]
	\caption{Greedy Priority-Aware Device Selection}
	\begin{algorithmic}[1]
		\State \textbf{Input:} \( \varphi_n^{(t)}, T_n^{(t)} \), \( n \in [1, N] \)
		\State \textbf{Output:} \( K_{\text{opt}}^{(t)} \) and \( \mathcal{S}^{(t)} \)
		\State Initialize \( \mathcal{Q} = \emptyset \), \( \mathcal{U}^{(t)} = \emptyset \), \( \mathcal{W}^{(t)} = \emptyset \), \( k = 1 \), \( t_{\text{max}} = 0 \)
		\State Sort devices based on \( \varphi_n^{(t)} \) in descending order
		\State Append \( 1^\prime \) to \( \mathcal{U}^{(t)} \)
		\State Set \( t_{\text{max}} =  T_{1^\prime}^{(t)} \)
		\For{\( n = 2 \) to \( N \)}
		\If{ \( t_{\text{max}} <  T_{n^\prime}^{(t)} \) }
		\State Build \( \mathcal{W}^{(t)} \) based on (\ref{eq:W})
		\State Calculate \( r \) based on (\ref{eq:WS-PAoI}), where \( T_c^{(t)} = t_{\text{max}} \), \(\mathcal{S}_n^{(t)} = \mathcal{U}^{(t)} \cup \mathcal{W}^{(t)} \)
		\State Append the triplet \( \{k : \mathcal{S}_n^{(t)} : r\} \) to \( \mathcal{Q} \)
		\State \( t_{\text{max}} =  T_{n^\prime}^{(t)} \)
		\EndIf
		\State Append \( n^\prime \) to \( \mathcal{U}^{(t)} \)
		\State \( k = k + 1 \)
		\EndFor
		\State Append the triplet \( \{N : \mathcal{S}_N^{(t)} : t_{\text{max}}/N \} \) to \( \mathcal{Q} \)
		\State \( K_{\text{opt}}^{(t)} \) and \( \mathcal{S}^{(t)} \) are the first and the second component of the element in \( \mathcal{Q} \) with the minimal third component, respectively
		\State \textbf{return} \( K_{\text{opt}}^{(t)} \) and \( \mathcal{S}^{(t)} \)
	\end{algorithmic}
	\label{alg:1}
\end{algorithm}

\noindent\textbf{Property 1.} \( \Psi\left(K^{(t)} | \varphi_n^{(t)}\right) \) is a non-decreasing staircase function with respect to \( K^{(t)} \).

\begin{proof}
	Reorder the devices based on their priority values \( \varphi_n^{(t)} \), and let \( n^\prime \) denote the ordered indices such that \( \varphi_{1^\prime}^{(t)} \geq \varphi_{2^\prime}^{(t)} \geq \dots \geq \varphi_{N^\prime}^{(t)} \). According to the definition of \( T_{c}^{(t)} \), which is based on the maximum completion time of the selected devices in round \( t \), we have \( \Psi\left(K^{(t)} | \varphi_n^{(t)}\right) = \max \left\{ T_{1^\prime}^{(t)}, T_{2^\prime}^{(t)}, \dots, T_{{K^{(t)}}^\prime}^{(t)} \right\} \). When \( K^{(t)} \) increases to \( K^{(t)} + 1 \), the function \( \Psi\left(K^{(t)} + 1 | \varphi_n^{(t)}\right) \) becomes \( \Psi\left(K^{(t)} + 1 | \varphi_n^{(t)}\right) = \max \left\{ T_{1^\prime}^{(t)}, T_{2^\prime}^{(t)}, \dots, T_{{K^{(t)}}^\prime}^{(t)}, T_{{(K^{(t)}+1)}^\prime}^{(t)} \right\} \). Since adding another device to the set can only increase or leave unchanged the maximum value of the completion times, it follows that \( \Psi\left(K^{(t)} + 1 | \varphi_n^{(t)}\right) \geq \Psi\left(K^{(t)} | \varphi_n^{(t)}\right) \). Therefore, \( \Psi\left(K^{(t)} | \varphi_n^{(t)}\right) \) is non-decreasing with respect to \( K^{(t)} \), which completes the proof.
\end{proof}

Using Property 1, if \( T_{{(K^{(t)}+1)}^\prime}^{(t)} > \Psi\left(K^{(t)} | \varphi_n^{(t)}\right) \) is not satisfied, (\ref{eq:simp}) decreases as \( K^{(t)} \) increases, since the first term remains constant while the second term decreases. However, if \( T_{{(K^{(t)}+1)}^\prime}^{(t)} > \Psi\left(K^{(t)} | \varphi_n^{(t)}\right) \) holds, (\ref{eq:simp}) may increase due to the rise in the first term. Consequently, values of \( K^{(t)} \) that satisfy \( T_{{(K^{(t)}+1)}^\prime}^{(t)} > \Psi\left(K^{(t)} | \varphi_n^{(t)}\right) \)  represent possible candidates for minimizing  (\ref{eq:simp}) and should be examined. 

In summary, we propose Algorithm \ref{alg:1} to determine the optimal number of devices and the set of selected devices by prioritizing high-priority devices. This algorithm uses a greedy approach to iteratively select devices, aiming to minimize the WS-PAoI. The details are presented in Algorithm \ref{alg:1}.

\section{MSE Minimization via Transmit Power and Normalizing Factor Optimization}

After determining the selected device set \( \mathcal{S}^{(t)} \) and the optimal number of devices \( K_{\text{opt}}^{(t)} \) in each communication round through Algorithm \ref{alg:1}, the focus of this section is shifted to optimizing the MSE for the given device set. The challenge in solving \textbf{P2} arises from the trade-offs and complexity of the optimization process. Specifically, the objective function (\ref{eq:avgMSE}) combines noise-induced error and signal misalignment error, which requires careful balancing between these two components. While increasing the receive normalizing factor \(\eta^{(t)} \) helps reduce the noise-induced error, it also amplifies the signal misalignment error. In addition, aligning the signal amplitude requires adjusting the transmit power of the  device, and the transmit power allocation coefficient is limited by the maximum power (\ref{p2}a) and the average power (\ref{p2}b), so perfect alignment cannot be achieved in many cases. In addition, due to the strong coupling between the transmit power allocation coefficients and the receive normalizing factors, the problem is non-convex, adding great complexity to the optimization process. Therefore, a reasonable transformation and decomposition of the original problem is required.
\subsection{Optimization of Receive Normalizing factor}
First, we keep the transmit power allocation coefficient $\alpha_{n}^{(t)}$ of the device fixed and optimize the receive normalizing factor $\eta^{(t)}$. The optimization problem is as follows:
{\small
\begin{equation}
	\textbf{P2.1}: \min_{\{\eta^{(t)}\geq0\}}\sum_{t=0}^{T-1}\left[\sum_{n=1}^{K_{\text{opt}}^{(t)}}\left(\frac{\sqrt{\alpha_{n}^{(t)}P_n^{\mathrm{max}}}|h_n^{(t)}|}{\sqrt{\eta^{(t)}}}-1\right)^2+\frac{\sigma^2}{\eta^{(t)}}\right].
\end{equation}}

Since the optimization variable \(\eta^{(t)}\) appears independently in each communication round \(t\), the problem \textbf{P2.1} can be naturally decomposed into \(T\) independent subproblems, one for each communication round. Each subproblem is expressed as:
\begin{equation}
	\min_{\eta^{(t)}\geq0} \mathcal{E}(\eta^{(t)}) \triangleq \sum_{n=1}^{K_{\text{opt}}^{(t)}}\left(\frac{\sqrt{\alpha_{n}^{(t)}P_n^{\mathrm{max}}}|h_n^{(t)}|}{\sqrt{\eta^{(t)}}}-1\right)^2+\frac{\sigma^2}{\eta^{(t)}}.
\end{equation}
By introducing the substitution \(\Omega^{(t)} = 1/\sqrt{\eta^{(t)}}\), the objective function is rewritten as:
\begin{equation}
\label{eq:omega}
	\mathcal{E}(\Omega^{(t)}) = \sum_{n=1}^{K_{\text{opt}}^{(t)}} \left(\sqrt{\alpha_{n}^{(t)}P_n^{\mathrm{max}}}|h_n^{(t)}|\Omega^{(t)}- 1\right)^2 + \left(\sigma\Omega^{(t)}\right)^2.
\end{equation}
(\ref{eq:omega}) is a convex objective function with respect to \(\Omega^{(t)}\). To determine the optimal value of \(\Omega^{(t)*}\), the first-order derivative of \(\mathcal{E}(\Omega^{(t)})\) is set to zero. Consequently, a closed-form solution for the optimal \(\Omega^{(t)*}\) is derived, and the corresponding optimal receive normalizing factor \(\eta^{(t)*}\) is expressed as:
\begin{equation}
\label{eq:eta}
	\eta^{(t)*} = \frac{1}{(\Omega^{(t)*})^2} = \left( \frac{\sigma^2 + \sum_{n=1}^{K_{\text{opt}}^{(t)}} (\sqrt{\alpha_{n}^{(t)}P_n^{\mathrm{max}}}|h_n^{(t)}|)^2}{\sum_{n=1}^{K_{\text{opt}}^{(t)}} \sqrt{\alpha_{n}^{(t)}P_n^{\mathrm{max}}}|h_n^{(t)}|} \right)^2.
\end{equation}
\subsection{Optimization of Transmit Power Allocation Coefficients}
We fix the receive normalizing factor \(\eta^{(t)} \) and focus on optimizing the transmit power allocation coefficients \( \alpha_{n}^{(t)} \) for each device by addressing the following optimization problem:
\begin{align}
\textbf{P2.2: } &\min_{\alpha_{n}^{(t)}}  \sum_{t=0}^{T-1} \left[ \sum_{n=1}^{K_{\text{opt}}^{(t)}} \left( \frac{\sqrt{\alpha_{n}^{(t)}P_n^{\mathrm{max}}}|h_n^{(t)}|}{\sqrt{\eta^{(t)}}} - 1 \right)^2 \right] \label{p22}\\
\text{s.t. } & \text{(\ref{p2}{a}), (\ref{p2}{b}).}\nonumber
\end{align}

It can be observed that the objective function (\ref{p22}) is separable across devices. Specifically, the contribution of each device's transmit power allocation coefficients \( \alpha_{n}^{(t)} \) to the overall objective function is independent. Therefore, the original problem \textbf{P2.2} can be decomposed into $K_{\text{opt}}^{(t)}$ smaller subproblems, each of which corresponds to a specific device \( n \):
\begin{align}
\min_{\alpha_{n}^{(t)}} & \sum_{t=0}^{T-1} \left( \frac{\sqrt{\alpha_{n}^{(t)}P_n^{\mathrm{max}}}|h_n^{(t)}|}{\sqrt{\eta^{(t)}}} - 1 \right)^2 \label{p23}\\\text{s.t. } & 0\leq\alpha_{n}^{(t)}\leq 1,\   \forall t, \tag{\ref{p23}{a}}\\&0\leq\sum_{t=0}^{T-1}\alpha_{n}^{(t)}P_n^{\mathrm{max}}\leq T \bar{P}_n. \tag{\ref{p23}{b}}
\end{align}

Each subproblem aims to optimize the transmit power allocation coefficients \( \alpha_{n}^{(t)} \) of a single device while satisfying its associated constraints. Since the decomposed subproblems are independent, the optimization for each device can be performed separately. Furthermore, these subproblems are convex, making them well-suited for solution using the KKT conditions. 

The Lagrangian function for the optimization problem (\ref{p23}) is given by:
{\small
\begin{equation}
\begin{aligned}
	&\mathcal{L}(\{\alpha_{n}^{(t)}\}, \gamma_n, \{\rho_{n}^{(t)}\}) = \sum_{t=0}^{T-1} \left( \frac{\sqrt{\alpha_{n}^{(t)} P_n^{\mathrm{max}}} \, |h_n^{(t)}|}{\sqrt{\eta^{(t)}}} - 1 \right)^2 \\
	&\quad+ \sum_{t=0}^{T-1} \rho_{n}^{(t)} (\alpha_{n}^{(t)} - 1)+\gamma_n \left( \sum_{t=0}^{T-1} \alpha_{n}^{(t)} P_n^{\mathrm{max}} - T\bar{P}_n \right) ,
\end{aligned}
\end{equation}}
where \( \rho_{n}^{(t)} \geq 0 \) and \( \gamma_n \geq 0 \) are the Lagrange multipliers. Then \( \{\alpha_{n}^{(t)*}\}, \{\rho_{n}^{(t)*}\}, \gamma_n^*   \) should satisfy the following KKT conditions:
\begin{align}
& 0 \leq \alpha_{n}^{(t)*} \leq 1,  \label{KKT1} \\
& 0 \leq \sum_{t=0}^{T-1} \alpha_{n}^{(t)*}  \leq \frac{T\bar{P}_n}{P_n^{\mathrm{max}}} , \label{KKT2} \\
& \rho_{n}^{(t)*} \geq 0, \gamma_n^* \geq 0  ,  \label{KKT3} \\
& \gamma_n^* \left( \sum_{t=0}^{T-1} \alpha_{n}^{(t)*}  - \frac{T\bar{P}_n}{P_n^{\mathrm{max}}} \right) = 0, \label{KKT4} \\
& \rho_{n}^{(t)*} \left( \alpha_{n}^{(t)*} - 1 \right) = 0, \label{KKT5} \\
& \left( \dfrac{\sqrt{\alpha_{n}^{(t)*}P_n^{\mathrm{max}}} \, |h_n^{(t)}|}{\sqrt{\eta^{(t)}}} - 1 \right) \dfrac{P_n^{\mathrm{max}} |h_n^{(t)}|}{\sqrt{\eta^{(t)} \alpha_{n}^{(t)*} P_n^{\mathrm{max}}}} + \gamma_n^* P_n^{\mathrm{max}} \nonumber\\\quad&+ \rho_{n}^{(t)*} = 0. \label{KKT6}
\end{align}

\begin{algorithm}[t]
	\caption{Iterative Algorithm for MSE Optimization}
	\begin{algorithmic}[1]
		\State \textbf{Input:} \(\{h_n^{(t)}\}_{t=0}^{T-1}\), stopping condition \(\epsilon_0\).
		\State \textbf{Output:} \(\{\eta^{(t)}\}\) and \(\{\alpha_{n}^{(t)}\}\).
		\State Initialize: Transmit power allocation coefficients \(\{\alpha_{n}^{(t)}\}^0\) and \(i = 0\).
		\Repeat
		\State \(i = i + 1\).
		\State Given \(\{\alpha_{n}^{(t)}\}^{i-1}\), update \(\{\eta^{(t)}\}^i\).
		\State Given \(\{\eta^{(t)}\}^i\), update \(\{\alpha_{n}^{(t)}\}^i\). 
		\Until \( \frac{\overline{\mathrm{MSE}}^{i-1} - \overline{\mathrm{MSE}}^i}{\overline{\mathrm{MSE}}^i} < \epsilon_0 \).
	\end{algorithmic}
\label{alg:2}
\end{algorithm}

By analyzing the KKT conditions, we can obtain the optimal solution of (\ref{p23}).
\begin{theorem}
	The optimal solution of $\alpha_{n}^{(t)*}$ is:
	\begin{equation}
	\label{theo:2}
	\left.\alpha_{n}^{(t)*}=\begin{cases}\min\left\{\frac{\eta^{(t)}}{P_n^{\mathrm{max}}|h_n^{(t)}|^2},1\right\},\\[2ex]\text{if }\sum_{t=0}^{T-1}\min\left\{\frac{\eta^{(t)}}{P_n^{\mathrm{max}}|h_n^{(t)}|^2},1\right\}< \frac{T\bar{P}_n}{P_n^{\mathrm{max}}},\\[2ex]\min\left\{\left(\frac{\sqrt{\eta^{(t)}}|h_n^{(t)}|}{\sqrt{P_n^{\mathrm{max}}}\left(|h_n^{(t)}|^2+\gamma_n^*\eta^{(t)}\right)}\right)^2,1\right\},\\[2ex]\text{if }\sum_{t=0}^{T-1}\min\left\{\frac{\eta^{(t)}}{P_n^{\mathrm{max}}|h_n^{(t)}|^2},1\right\}\geq\frac{T\bar{P}_n}{P_n^{\mathrm{max}}},\end{cases}\right.
	\end{equation}
	where $\gamma_n^*$ can be found to ensure the average power constraint $\sum_{t=0}^{T-1}\alpha_{n}^{(t)*}= \frac{T\bar{P}_n}{P_n^{\mathrm{max}}}$ via the bisection search method.
\end{theorem}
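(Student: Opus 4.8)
The plan is to lean on the convexity already established for subproblem \eqref{p23}: since the objective is convex and all constraints are affine (a box constraint plus one linear budget), the stated KKT conditions \eqref{KKT1}--\eqref{KKT6} are both necessary and sufficient for global optimality. Hence it suffices to construct a primal point together with multipliers $\gamma_n^*\ge 0$ and $\{\rho_n^{(t)*}\ge 0\}$ satisfying all six conditions, and the two-branch structure of \eqref{theo:2} will emerge from a single case split on whether the average-power budget (\ref{p23}{b}) is active. Throughout I write the per-round objective as $\mathcal{E}_n^{(t)}(\alpha)=\big(\sqrt{\alpha P_n^{\mathrm{max}}}|h_n^{(t)}|/\sqrt{\eta^{(t)}}-1\big)^2$.

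First I would treat the \emph{slack-budget} case. By complementary slackness \eqref{KKT4}, if constraint (\ref{p23}{b}) is not binding then $\gamma_n^*=0$, and the Lagrangian decouples across rounds. For each $t$, the stationarity condition \eqref{KKT6} with $\gamma_n^*=0$ reduces to a single-round optimality condition whose interior solution ($\rho_n^{(t)*}=0$) makes the residual $\sqrt{\alpha_n^{(t)}P_n^{\mathrm{max}}}|h_n^{(t)}|/\sqrt{\eta^{(t)}}-1$ vanish, giving the perfect-alignment value $\alpha_n^{(t)}=\eta^{(t)}/(P_n^{\mathrm{max}}|h_n^{(t)}|^2)$. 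Enforcing the box through \eqref{KKT5} projects this onto $[0,1]$ via $\min\{\cdot,1\}$, which is exactly the first branch of \eqref{theo:2}; this branch is self-consistent precisely when the resulting coefficients satisfy \eqref{KKT2} strictly, i.e. $\sum_t\min\{\eta^{(t)}/(P_n^{\mathrm{max}}|h_n^{(t)}|^2),1\}<T\bar{P}_n/P_n^{\mathrm{max}}$.

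For the complementary \emph{binding-budget} case the constraint forces $\gamma_n^*>0$, so \eqref{KKT4} pins down the active equality $\sum_t\alpha_n^{(t)*}=T\bar{P}_n/P_n^{\mathrm{max}}$. The one genuine computation is to solve the stationarity equation \eqref{KKT6} for a single round with $\rho_n^{(t)*}=0$. The clean route is the substitution $x=\sqrt{\alpha_n^{(t)*}}$, which turns \eqref{KKT6} into a linear equation in $1/x$; solving and squaring back recovers the water-filling-type expression $\alpha_n^{(t)*}=\big(\sqrt{\eta^{(t)}}|h_n^{(t)}|/[\sqrt{P_n^{\mathrm{max}}}(|h_n^{(t)}|^2+\gamma_n^*\eta^{(t)})]\big)^2$, and the clamp $\min\{\cdot,1\}$ again absorbs the rounds where the box upper bound is active ($\rho_n^{(t)*}>0$), yielding the second branch.

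Finally I would close the argument by pinning down $\gamma_n^*$: each clamped coefficient is non-increasing in $\gamma_n^*$ (its denominator grows with $\gamma_n^*$) and reduces to the Case-A value at $\gamma_n^*=0$, so $\sum_t\alpha_n^{(t)*}(\gamma_n^*)$ is continuous and non-increasing from the Case-A sum down to $0$; an intermediate-value argument then gives a unique $\gamma_n^*\ge 0$ satisfying the binding constraint, locatable by the claimed bisection. The step I expect to be most delicate is not the per-round algebra (routine once $x=\sqrt{\alpha_n^{(t)}}$ is introduced) but the bookkeeping of the two inequality constraints through complementary slackness—verifying that the box-clamp and the budget multiplier interact consistently so that the constructed primal–dual pair simultaneously satisfies \eqref{KKT3}--\eqref{KKT6} in every round.
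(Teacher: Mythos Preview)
Your proposal is correct and follows essentially the same KKT-based route as the paper's proof in Appendix~C: both split on whether the average-power budget is active (i.e., on $\gamma_n^*=0$ versus $\gamma_n^*>0$), solve the stationarity condition \eqref{KKT6} in each case, and absorb the box constraint through complementary slackness \eqref{KKT5} as a $\min\{\cdot,1\}$ clamp. The only cosmetic difference is the order of the case analysis---the paper first handles $\rho_n^{(t)*}$ to obtain the clamped form for generic $\gamma_n^*$ and then splits on the budget, whereas you split on the budget first---and your explicit monotonicity argument justifying the bisection search is slightly more detailed than the paper's, which simply asserts it.
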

\begin{proof}
	See Appendix C for details.
\end{proof}

The overall description of the solution is given by Algorithm \ref{alg:2}. Algorithm \ref{alg:2} aims to minimize the MSE by iteratively optimizing the parameters. The algorithm alternates between two steps: updating the receive normalizing factors and refining the power allocation coefficients. The process is iteratively repeated until convergence, as determined by a predefined stopping criterion based on the relative change in MSE across iterations.

\noindent\textit{Remark.} To further enhance the practicality of our solution, we introduce an online optimization method for the transmit power allocation coefficients \( \alpha_{n}^{(t)} \), making them depend only on the current round's channel state information (CSI) \( h_n^{(t)} \). This is achieved by dynamically updating the Lagrange multipliers \( \gamma_n^{(t)} \) within each communication round. Instead of solving for a fixed \( \gamma_n^* \) that requires knowledge of the CSI over all rounds, we update \( \gamma_n^{(t)} \) based on the instantaneous power usage and the average power constraint in an online fashion. Specifically, after obtaining \( \alpha_{n}^{(t)} \) from the per-round optimization, we adjust \( \gamma_n^{(t)} \) using a simple iterative rule that accounts for deviations from the average power budget. This online optimization approach decouples the optimization across time slots, allowing \( \alpha_{n}^{(t)} \) to be computed using only the current CSI and the updated \( \gamma_n^{(t)} \). By dynamically adjusting the Lagrange multipliers online, we ensure that the average power constraint is satisfied over time while maintaining the integrity of our original solution. This modification results in an efficient algorithm suitable for real-time implementation, as it leverages our existing derivations and requires minimal additional computation.

\section{Performance Evaluation}
\subsection{Simulation Setting}
In the simulations, the wireless channels between the edge devices and the edge server across different communication rounds are modeled as independent and identically distributed (i.i.d.) Rayleigh fading channels. The total number of edge devices is set to \( N = 20 \). The signal-to-noise ratio (SNR) for each edge device \( n \) is defined as \( \text{SNR}_n = \frac{\bar{P}_n}{\sigma^2} \), with a default value of 10 dB \cite{cao2020optimized}. The maximum transmit power for each device is set to \( P_n^{\text{max}} = 3\bar{P}_n \). The stopping condition \(\epsilon_0\) is set to $10^{-5}$. The required CPU cycles $\mu$ to train one sample is set to $10^7$. The available CPU cycles $C_n$ at device $n$ is set to $0.5$ GHz. The available bandwidth $B$ is set to $20$ MHz. The size of the local model $D$ is set to $11.7$ M.

We utilize the non-IID CIFAR-10 and CIFAR-100 datasets to train the global model based on the ResNet-18 architecture. ResNet-18 is a classic convolutional neural network comprising 18 layers, including convolutional layers, batch normalization layers, and fully connected layers, specifically designed for image classification tasks. The weight \( q_n \) is set based on \cite{deng2021auction}, where we apply exponential weighting according to the number of data classes \( M_n \) contained in device \( n \). The formula is given as  $q_n = \frac{2^{M_n}}{\sum_{n} 2^{M_n}}$.

We conduct a comparative analysis with four baselines, detailed as follows. The baselines differ from our proposed method only in the specific aspects highlighted below, while all other settings remain consistent with the proposed FedAirAoI strategy.
\begin{itemize}
	\item Full Power: Each device transmits with a fixed power \(\bar{P}_n\), and the receive normalizing factor \(\eta^{(t)}\) is computed by substituting \(\bar{P}_n\) into  (\ref{eq:eta}).
	\item Channel Inversion \cite{li2019wirelessly}: Channel inversion aims to compensate for the channel gain of each device, eliminating the interference caused by the channel to the transmitted signal and ensuring consistent signal amplitude at the receiver. Specifically, when the channel gain is poor (e.g., \( |h_n^{(t)}|^2 \) is very small), the device requires higher transmit power to compensate for the channel loss. If the channel condition of the device is too poor, such that even using the maximum transmit power \( P_n^{\max} \) cannot meet the requirement, the device is deactivated.
	\item FedAvg \cite{mcmahan2017communication}: As a fair device selection method, FedAvg randomly selects a group of clients for training. The server can only perform aggregation and start the next round after all the selected devices have completed their computations and uploaded their local models.
	\item HybridFL \cite{wu2020accelerating}: As a device-dropping selection method, the concept of HybridFL is to drop selectedm devices in a round if their completion times exceed the deadline.
\end{itemize}
\subsection{Simulation Results}


\begin{figure}[t]
	\centering
	\includegraphics[width=0.8\linewidth]{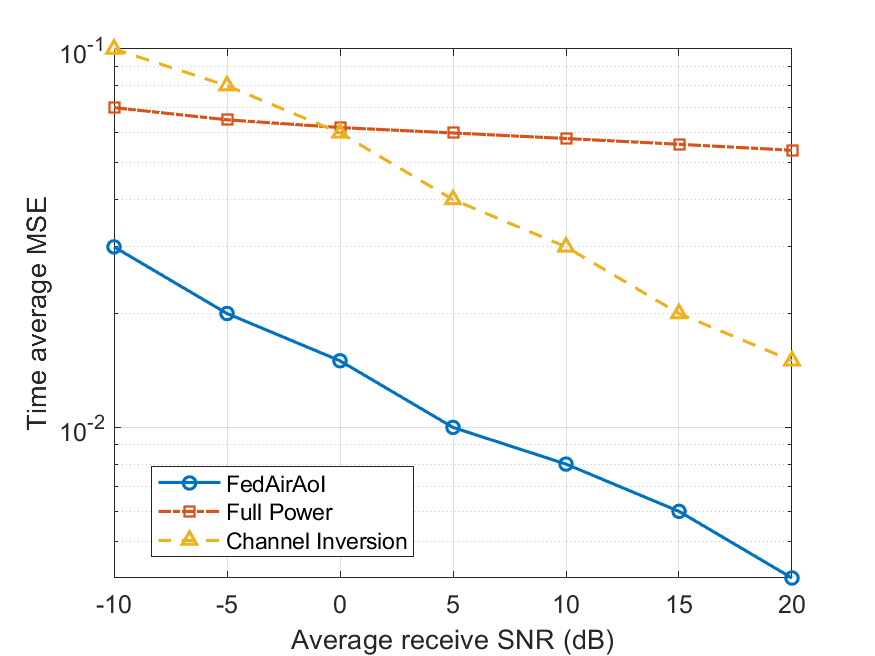}
	\caption{Time-average MSE variation with SNR.}
	\label{fig:exp1}
\end{figure}

\begin{figure}[t]
	\centering
	\includegraphics[width=0.8\linewidth]{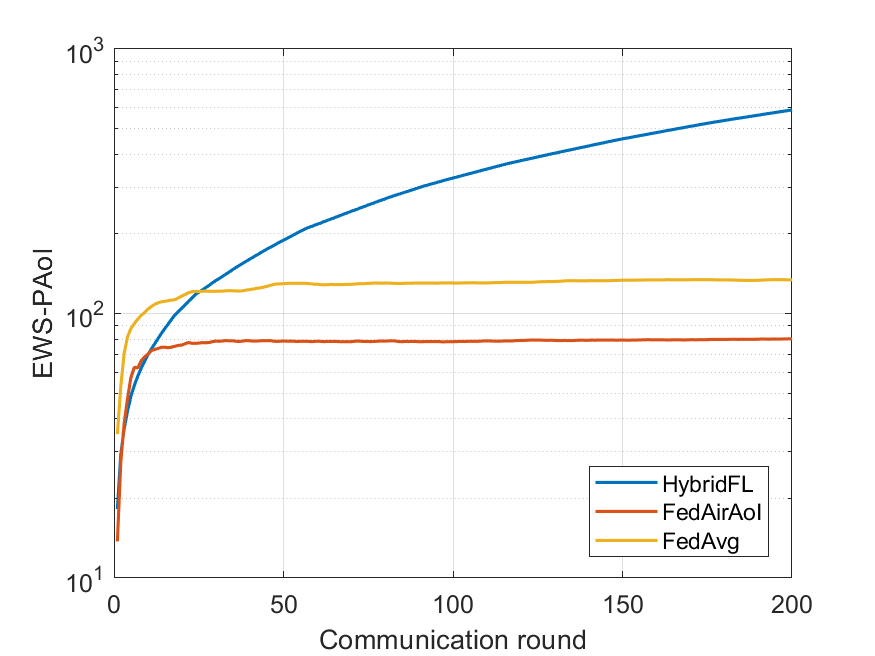}
	\caption{EWS-PAoI over communication rounds.}
	\label{fig:exp2}
\end{figure}

\begin{figure}[t]
	\centering
	\includegraphics[width=0.8\linewidth]{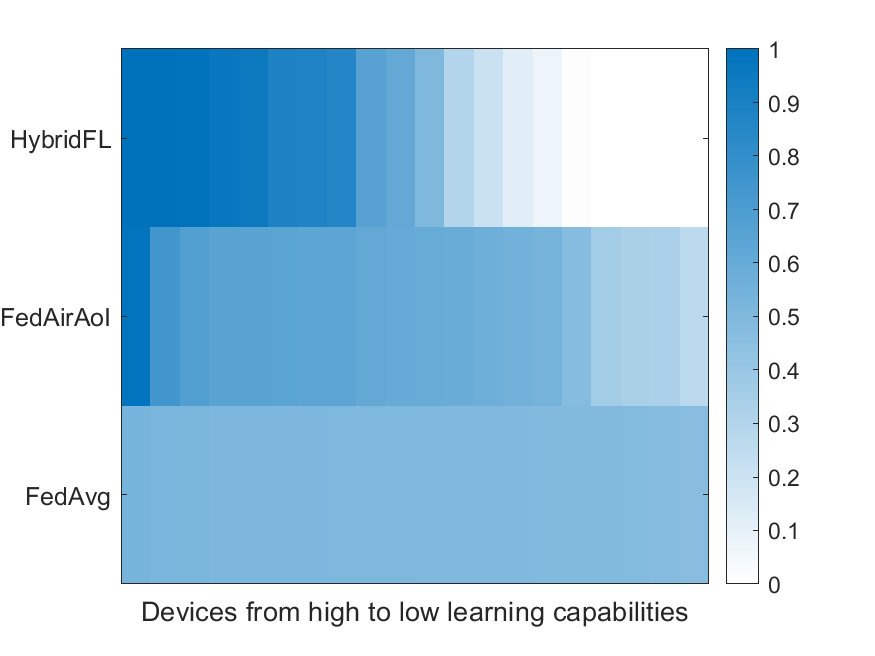}
	\caption{Device selection frequency under different methods.}
	\label{fig:exp3}
\end{figure}
\begin{figure}[t]
	\centering
	\includegraphics[width=0.8\linewidth]{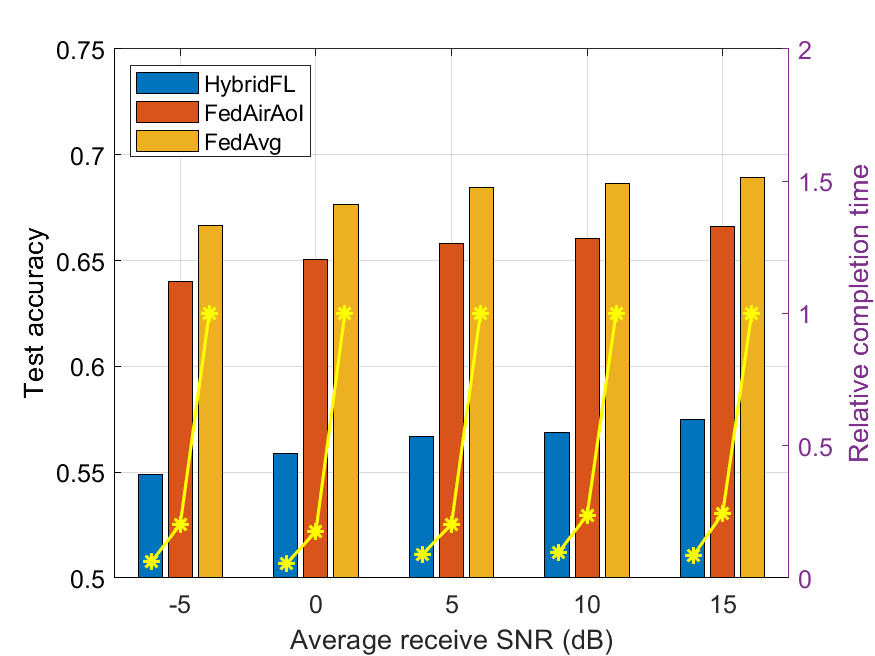}
	\caption{Impact of SNR on test accuracy and relative completion time on CIFAR-10 dataset.}
	\label{fig:exp4}
\end{figure}
In Fig. \ref{fig:exp1}, we show how the time average MSE changes with the average receive SNR. It can be observed that the time average MSE of all methods decreases as the average receive SNR increases. Across the entire SNR range, FedAirAoI outperforms the other two baseline methods. Under low SNR conditions, the Full Power method performs better than the Channel Inversion method. This is because full-power transmission significantly suppresses noise-induced errors, which dominate the MSE in such scenarios. As the average receive SNR increases, the performance gap between FedAirAoI and the Full Power method widens, indicating that the Full Power method is less adaptable to varying SNR conditions.

Fig. \ref{fig:exp2} illustrates the variation of EWS-PAoI across communication rounds for three methods. The EWS-PAoI of HybridFL increases significantly with the number of communication rounds. This is due to HybridFL’s inability to effectively handle stragglers, leading to a sustained increase in the PAoI of unselected straggler devices. In comparison, FedAvg performs slightly worse than FedAirAoI because it fails to account for the AoI in its selection policy, causing the stragglers to impact the overall update freshness. On the other hand, the EWS-PAoI of FedAirAoI stabilizes in later communication rounds, maintaining relatively low values. This is attributed to its fair device selection strategy and reduced impact of stragglers.

\begin{figure*}[t]
	\centering
	\begin{subfigure}[b]{0.45\textwidth}
		\centering
		\includegraphics[width=\textwidth]{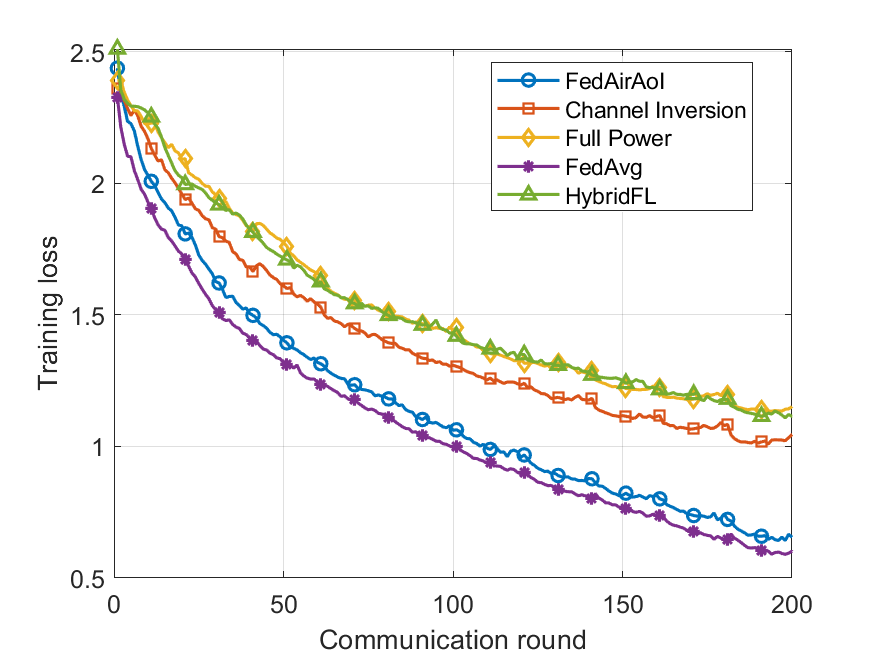}
		\caption{Training loss comparison on CIFAR-10 dataset.}
		\label{fig:exp51}
	\end{subfigure}
	\hfill
	\begin{subfigure}[b]{0.45\textwidth}
		\centering
		\includegraphics[width=\textwidth]{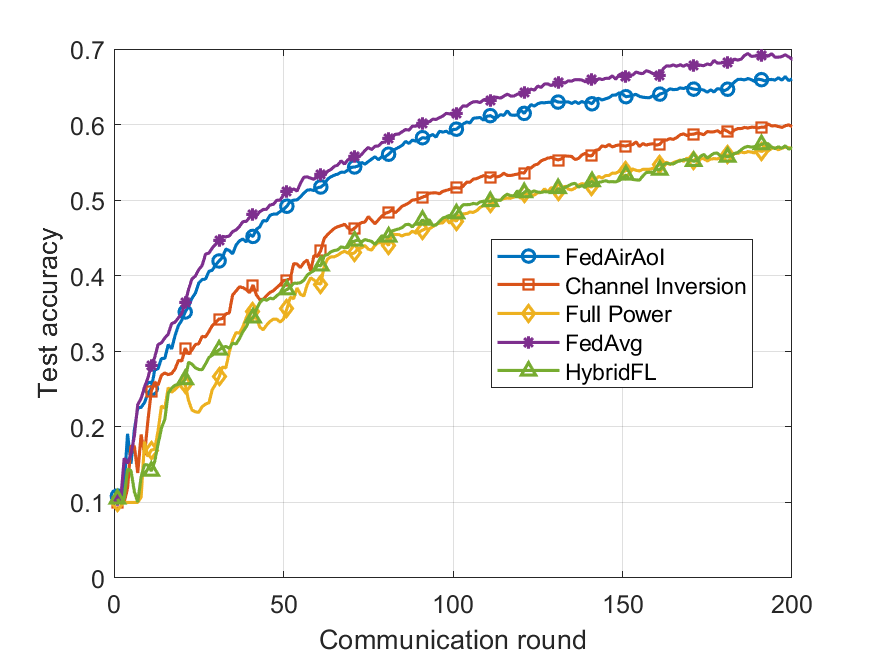}
		\caption{Test accuracy comparison on CIFAR-10 dataset.}
		\label{fig:exp52}
	\end{subfigure}
	
	\vspace{0.5em} 
	
	\begin{subfigure}[b]{0.45\textwidth}
		\centering
		\includegraphics[width=\textwidth]{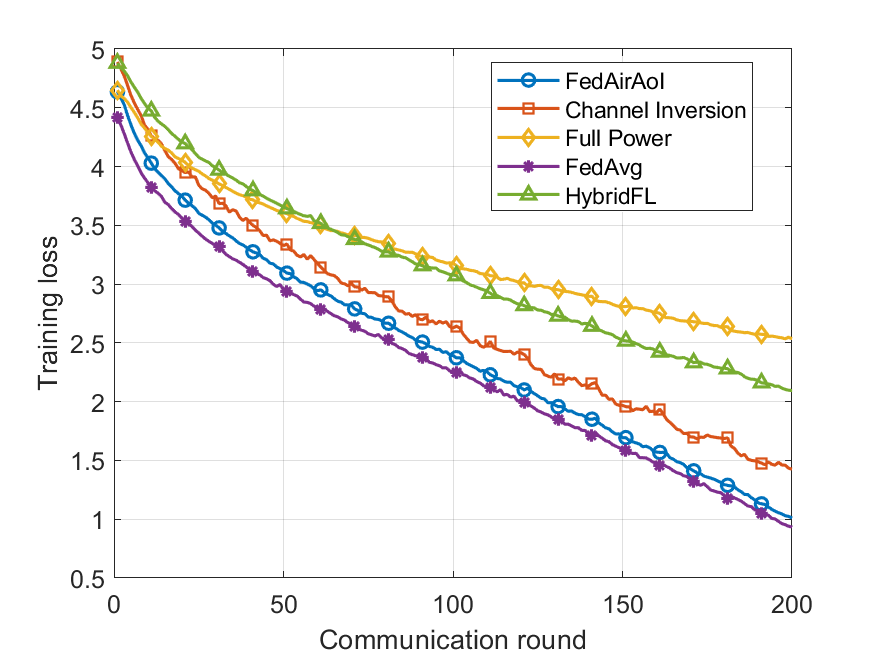}
		\caption{Training loss comparison on CIFAR-100 dataset.}
		\label{fig:exp61}
	\end{subfigure}
	\hfill
	\begin{subfigure}[b]{0.45\textwidth}
		\centering
		\includegraphics[width=\textwidth]{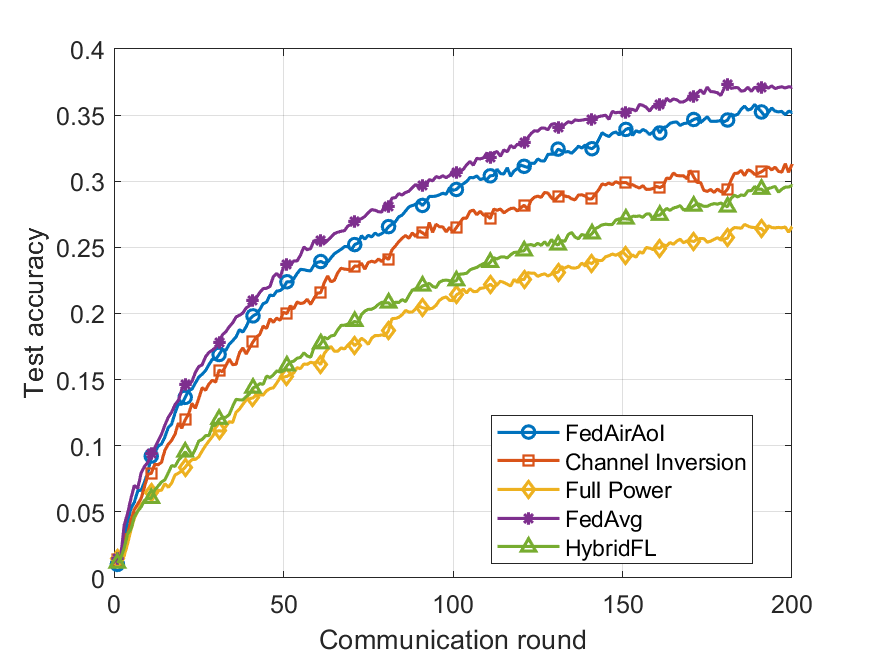}
		\caption{Test accuracy comparison on CIFAR-100 dataset.}
		\label{fig:exp62}
	\end{subfigure}
	
	\caption{Comparison of different methods on CIFAR-10 and CIFAR-100 datasets.}
	\label{fig:combined_exp_cifar}
\end{figure*}

Fig. \ref{fig:exp3} highlights the differences in device selection frequency among the three methods, with devices sorted from high to low learning capabilities. In HybridFL, the selection frequency is concentrated on devices with high learning capabilities, while devices with low learning capabilities are rarely selected. In contrast, FedAvg disperses the selection frequency more uniformly across all devices, effectively mitigating extreme bias. However, this approach may reduce overall training efficiency due to the negative impact of devices with low learning capabilities. FedAirAoI, on the other hand, employs a more balanced device selection strategy, markedly enhancing the selection frequency for devices with lower learning capabilities while concurrently preserving a relatively high selection frequency for those with higher learning capabilities.

To evaluate training efficiency, the average completion time (\( \bar{T_c} \)) of a training round is used as the performance metric, calculated as \( \bar{T_c} = \frac{1}{T} \sum_{t=1}^T T_c^{(t)} \). In Fig. \ref{fig:exp4}, we use relative completion time as the metric. For the other two methods, their relative completion time is calculated as the ratio of their respective average completion time to that of FedAvg. Fig. \ref{fig:exp4} presents the performance of the three methods in terms of test accuracy and relative completion time under varying average SNR conditions. As the SNR increases, FedAvg consistently achieves the highest test accuracy, but its relative completion time remains the highest due to its unoptimized device selection strategy. HybridFL achieves the shortest completion time, but its focus on selecting devices with high learning capabilities may lead to insufficient data diversity and reduced model generalization. FedAirAoI strikes a good balance between accuracy and completion time. For instance, when SNR = 10 dB, FedAirAoI achieves a test accuracy slightly lower than FedAvg by 2.62\% but significantly higher than HybridFL by 9.15\%. At the same time, its relative completion time is 76.3\% lower than FedAvg. This demonstrates that FedAirAoI effectively mitigates the impact of stragglers through optimized device selection while balancing data diversity and efficiency.

In Fig. \ref{fig:combined_exp_cifar}, we present the training loss and test accuracy of different methods on the CIFAR-10 and CIFAR-100 datasets. The results show that our proposed method, FedAirAoI, outperforms the Channel Inversion and Full Power methods across both datasets. This advantage stems from FedAirAoI’s optimization of the convergence upper bound for FL. Both the Channel Inversion and Full Power methods introduce significant signal aggregation errors during over-the-air aggregation, which negatively impact FL performance. While the Channel Inversion method does not directly optimize the convergence upper bound of FL, it mitigates signal aggregation errors to some extent by applying a channel threshold. In contrast, the Full Power method lacks any optimization, resulting in poorer performance. Additionally, the performance of FedAirAoI lies between that of HybridFL and FedAvg,aligning with the previous observations.

\label{sec:performance}

\section{Conclusion}
\label{sec:conclusion}
In this paper, we propose an AoI-based device selection and transmit power optimization framework for over-the-air FL to address the challenges of maintaining fairness, timeliness, and communication efficiency in resource-constrained wireless networks. By theoretically analyzing the convergence properties of over-the-air FL, we highlight the impact of device participation and signal aggregation errors on the convergence upper bound. Based on this, we developed the FedAirAoI framework, which integrates the minimization of the EWS-PAoI and a transmit power allocation strategy to reduce the time-average MSE. Simulation results demonstrate that FedAirAoI achieves a superior balance between training efficiency, fairness, and model performance compared to baseline methods. By stabilizing EWS-PAoI and reducing MSE, our approach ensures timely updates from participating devices while effectively mitigating the impact of stragglers and signal aggregation errors. 

{
	
	\section*{Appendix}
	
	\subsection{Proof of Theorem \ref{theo:1}}
	Before proving Theorem \ref{theo:1}, we fırst present the following
	three useful lemmas. Lemma \ref{lem:1}, \ref{lem:2} and \ref{lem:3} are proved in \cite{zou2022knowledge}.  Lemma  \ref{lem:4} is proved in Appendix B.
	\begin{lemma}
		\label{lem:1}
		With Assumption \ref{ass:2} and \ref{ass:3}, the following inequality holds
		\begin{equation}
		\begin{aligned}
		&\lambda\mathbb{E}[\langle\nabla F(\boldsymbol{w}^{(t)}),\boldsymbol{\theta}^{(t)}\rangle] \leq \\
		&\quad -\frac{\lambda\phi}{2}\|\nabla F(\boldsymbol{w}^{(t)})\|^2 
		- \frac{\lambda}{2\phi}\mathbb{E}[\|\boldsymbol{\theta}^{(t)}\|^2] \\
		&\quad + \frac{\lambda L^2}{K}\sum_{n \in \mathcal{S}^{(t)}}\sum_{\zeta=0}^{\phi-1}\mathbb{E}[\|\boldsymbol{w}^{(t)}-\boldsymbol{w}_n^{(t,\zeta)}\|^2] \\
		&\quad + \lambda\phi\xi^2.
		\end{aligned}
		\end{equation} 
	\end{lemma}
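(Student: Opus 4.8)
The plan is to prove the bound by conditioning on the current global iterate $\boldsymbol{w}^{(t)}$ and the selected set $\mathcal{S}^{(t)}$, so that the only randomness left in $\mathbb{E}[\cdot]$ is the mini-batch sampling, and by working from the definition $\boldsymbol{\theta}^{(t)}=\frac{1}{K}\sum_{n\in\mathcal{S}^{(t)}}\sum_{\zeta=0}^{\phi-1}\tilde{\boldsymbol{g}}_n^{(t,\zeta)}$ implied by $q_n=1/N$. Because this inner-product term enters the later smoothness-based descent step as $-\lambda\langle\nabla F(\boldsymbol{w}^{(t)}),\boldsymbol{\theta}^{(t)}\rangle$, the object to control is exactly $\langle\nabla F(\boldsymbol{w}^{(t)}),\boldsymbol{\theta}^{(t)}\rangle$ in terms of $\|\nabla F(\boldsymbol{w}^{(t)})\|^2$, $\mathbb{E}[\|\boldsymbol{\theta}^{(t)}\|^2]$, and controllable residuals. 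First I would invoke Assumption \ref{ass:3}: splitting each $\tilde{\boldsymbol{g}}_n^{(t,\zeta)}=\nabla F_n(\boldsymbol{w}_n^{(t,\zeta)})+\boldsymbol{\epsilon}_n^{(t,\zeta)}$ into its conditional mean and a zero-mean noise, the cross term against $\nabla F(\boldsymbol{w}^{(t)})$ vanishes, so the expected inner product collapses to the deterministic $\langle\nabla F(\boldsymbol{w}^{(t)}),\bar{\boldsymbol{g}}^{(t)}\rangle$ with $\bar{\boldsymbol{g}}^{(t)}=\mathbb{E}[\boldsymbol{\theta}^{(t)}]=\frac{1}{K}\sum_{n\in\mathcal{S}^{(t)}}\sum_{\zeta}\nabla F_n(\boldsymbol{w}_n^{(t,\zeta)})$; the variance $\xi^2$ will re-enter only at the variance-decomposition step below.

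The central algebraic step is a scale-matched polarization identity, $\langle a,b\rangle=\tfrac{\rho}{2}\|a\|^2+\tfrac{1}{2\rho}\|b\|^2-\tfrac{1}{2\rho}\|b-\rho a\|^2$, applied with $a=\nabla F(\boldsymbol{w}^{(t)})$, $b=\bar{\boldsymbol{g}}^{(t)}$, and $\rho=\phi$. This single step produces the two leading terms $\tfrac{\phi}{2}\|\nabla F(\boldsymbol{w}^{(t)})\|^2$ and $\tfrac{1}{2\phi}\|\bar{\boldsymbol{g}}^{(t)}\|^2$ and isolates the residual $\tfrac{1}{2\phi}\|\bar{\boldsymbol{g}}^{(t)}-\phi\nabla F(\boldsymbol{w}^{(t)})\|^2$. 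The choice $\rho=\phi$ is forced by the fact that $\boldsymbol{\theta}^{(t)}$ aggregates $\phi$ local steps and therefore lives on the scale of $\phi\nabla F(\boldsymbol{w}^{(t)})$; any other scaling would leave the two leading coefficients unbalanced.

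To bound the residual I would use the consistency $\frac{1}{K}\sum_{n\in\mathcal{S}^{(t)}}\nabla F_n(\boldsymbol{w}^{(t)})=\nabla F(\boldsymbol{w}^{(t)})$ — the gradient of the per-round objective $F(\cdot,\mathcal{S}^{(t)})$ under uniform weights — to write $\bar{\boldsymbol{g}}^{(t)}-\phi\nabla F(\boldsymbol{w}^{(t)})=\frac{1}{K}\sum_{n,\zeta}\big(\nabla F_n(\boldsymbol{w}_n^{(t,\zeta)})-\nabla F_n(\boldsymbol{w}^{(t)})\big)$, then apply Jensen/Cauchy--Schwarz to pull the average inside the squared norm and Assumption \ref{ass:2} ($L$-smoothness) to turn each difference into $L^2\|\boldsymbol{w}^{(t)}-\boldsymbol{w}_n^{(t,\zeta)}\|^2$. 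This yields the client-drift term $\frac{L^2}{K}\sum_{n\in\mathcal{S}^{(t)}}\sum_{\zeta}\|\boldsymbol{w}^{(t)}-\boldsymbol{w}_n^{(t,\zeta)}\|^2$. Since the statement is phrased through $\mathbb{E}[\|\boldsymbol{\theta}^{(t)}\|^2]$ rather than $\|\bar{\boldsymbol{g}}^{(t)}\|^2$, I would then use $\mathbb{E}[\|\boldsymbol{\theta}^{(t)}\|^2]=\|\bar{\boldsymbol{g}}^{(t)}\|^2+\mathrm{Var}(\boldsymbol{\theta}^{(t)})$, with $\mathrm{Var}(\boldsymbol{\theta}^{(t)})$ bounded via Assumption \ref{ass:3}, folding the leftover into the $\xi^2$ contribution. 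Multiplying through by $\lambda$ and inserting the result where the inner product appears as $-\lambda\langle\nabla F(\boldsymbol{w}^{(t)}),\boldsymbol{\theta}^{(t)}\rangle$ in the descent step then produces exactly the displayed bound.

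The main obstacle will be the bookkeeping of the two distinct error sources — stochastic-gradient noise ($\xi^2$, Assumption \ref{ass:3}) and client drift ($\|\boldsymbol{w}^{(t)}-\boldsymbol{w}_n^{(t,\zeta)}\|^2$, Assumption \ref{ass:2}) — so that neither is double counted and the leading coefficients emerge cleanly as $\phi/2$ and $1/(2\phi)$. A related subtlety is the direction of the variance step: replacing $\|\bar{\boldsymbol{g}}^{(t)}\|^2$ by $\mathbb{E}[\|\boldsymbol{\theta}^{(t)}\|^2]$ shifts the inequality in a fixed direction, so the variance must be \emph{added back} rather than dropped, and this has to be ordered correctly relative to the polarization step. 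One must also check that, because the per-round objective is defined over $\mathcal{S}^{(t)}$, no heterogeneity term $\sigma_h^2$ intrudes into this lemma; the partial-participation gap to the all-device gradient is instead deferred to Assumption \ref{ass:5} and the companion lemmas.
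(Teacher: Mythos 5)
Your proposal is correct in substance, but note first that the paper offers no in-text proof of Lemma \ref{lem:1} to compare against: it is deferred to \cite{zou2022knowledge}, so the benchmark is the standard argument from that literature. Your route matches it in all essentials — unbiasedness (Assumption \ref{ass:3}) to pass to conditional means, a polarization identity for the inner product, $L$-smoothness (Assumption \ref{ass:2}) plus Jensen for the client-drift residual, and a variance decomposition to trade $\|\bar{\boldsymbol{g}}^{(t)}\|^2$ for $\mathbb{E}[\|\boldsymbol{\theta}^{(t)}\|^2]$ at a $\xi^2$ cost — with one genuine structural difference: you apply the polarization once at the aggregated scale $\rho=\phi$, whereas the standard proof applies the unit-scale identity $-2\langle a,b_\zeta\rangle=-\|a\|^2-\|b_\zeta\|^2+\|a-b_\zeta\|^2$ separately at each local iteration $\zeta$, with $b_\zeta=\frac{1}{K}\sum_{n\in\mathcal{S}^{(t)}}\nabla F_n(\boldsymbol{w}_n^{(t,\zeta)})$, then sums over $\zeta$ and uses Cauchy--Schwarz to convert $-\frac12\sum_\zeta\|b_\zeta\|^2$ into $-\frac{1}{2\phi}\|\sum_\zeta b_\zeta\|^2$. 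Your one-shot variant is algebraically equivalent and arguably cleaner, and your constants close with slack (you obtain $L^2/2K$ against the stated $L^2/K$, and $\xi^2/2K$ against the stated $\phi\xi^2$); the per-iteration version buys measurability of each $b_\zeta$ with respect to the natural filtration, which matters for the one place where your write-up is loose (below). Your reading of $\nabla F$ as the gradient of the per-round objective over $\mathcal{S}^{(t)}$, with heterogeneity deferred to Assumption \ref{ass:5}, is also the right one — otherwise a $\sigma_h^2$ term would have to appear in the lemma.

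Two flags. First, the sign: as printed, the lemma bounds $+\lambda\mathbb{E}[\langle\nabla F(\boldsymbol{w}^{(t)}),\boldsymbol{\theta}^{(t)}\rangle]$ above by terms whose leading parts are negative, which is false as literally stated (take $\phi=1$ and $\xi=0$: the left side is $\lambda\|\nabla F(\boldsymbol{w}^{(t)})\|^2>0$ while the right side is $-\lambda\|\nabla F(\boldsymbol{w}^{(t)})\|^2$). The true statement — and the one actually invoked in the proof of Theorem \ref{theo:1}, where the term appears as $-\lambda\langle\nabla F(\boldsymbol{w}^{(t)}),\boldsymbol{\theta}^{(t)}\rangle$ — is the bound on the negated inner product, and your polarization identity, once negated, proves exactly that version; your narrative shows you silently corrected the typo, but you should state this explicitly rather than claim to reproduce "exactly the displayed bound." Second, a mild rigor gap: you treat $\bar{\boldsymbol{g}}^{(t)}=\mathbb{E}[\boldsymbol{\theta}^{(t)}]$ as deterministic and write $\mathbb{E}[\|\boldsymbol{\theta}^{(t)}\|^2]=\|\bar{\boldsymbol{g}}^{(t)}\|^2+\operatorname{Var}(\boldsymbol{\theta}^{(t)})$, but $\boldsymbol{w}_n^{(t,\zeta)}$ is random for $\zeta\geq1$, so the decomposition must be done conditionally: writing $\boldsymbol{\theta}^{(t)}=\sum_\zeta b_\zeta+\sum_\zeta\epsilon_\zeta$ with martingale differences $\epsilon_\zeta$, the cross terms $\mathbb{E}\langle b_{\zeta'},\epsilon_\zeta\rangle$ vanish for $\zeta'\leq\zeta$ but not automatically for $\zeta'>\zeta$. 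Either exploit the per-iteration structure or pay a constant via $\|x+y\|^2\leq2\|x\|^2+2\|y\|^2$; the slack in your constants absorbs either fix, so this is repairable bookkeeping rather than a broken idea.
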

	\begin{lemma}
		\label{lem:2}
		With Assumption \ref{ass:2}, \ref{ass:3} and Definition \ref{def:1}, the difference between the global model vector and the individual local model vector is bounded, i.e.,
		\begin{equation}
		\begin{aligned}
		\sum_{\zeta=0}^{\phi-1}\mathbb{E}[\|\boldsymbol{w}^{(t)}-\boldsymbol{w}_n^{(t,\zeta)}\|^2] &\leq 
		\frac{4\lambda^2\xi^2\phi^3}{3} + 4\lambda^2\sigma_h^2\phi^3 \\
		&\quad + 4\lambda^2\phi^3\|\nabla F(\boldsymbol{w}^{(t)})\|^2.
		\end{aligned}
		\end{equation}
	\end{lemma}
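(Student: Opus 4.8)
The plan is to bound the per-round client drift by unrolling the $\phi$ local SGD steps into an accumulated sum of stochastic gradients and then closing a self-referential inequality using the learning-rate restriction of Theorem~\ref{theo:1}. First I would use the local update rule together with the initialization $\boldsymbol{w}_n^{(t,0)}=\boldsymbol{w}^{(t)}$ to write $\boldsymbol{w}^{(t)}-\boldsymbol{w}_n^{(t,\zeta)}=\lambda\sum_{j=0}^{\zeta-1}\tilde{\boldsymbol{g}}_n^{(t,j)}$, so that $\mathbb{E}\|\boldsymbol{w}^{(t)}-\boldsymbol{w}_n^{(t,\zeta)}\|^2=\lambda^2\,\mathbb{E}\big\|\sum_{j=0}^{\zeta-1}\tilde{\boldsymbol{g}}_n^{(t,j)}\big\|^2$. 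Applying the Cauchy--Schwarz (power-mean) inequality to pull the finite sum out gives $\mathbb{E}\|\boldsymbol{w}^{(t)}-\boldsymbol{w}_n^{(t,\zeta)}\|^2\le \lambda^2\zeta\sum_{j=0}^{\zeta-1}\mathbb{E}\|\tilde{\boldsymbol{g}}_n^{(t,j)}\|^2$, with the $\zeta=0$ term vanishing.

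Next I would bound each $\mathbb{E}\|\tilde{\boldsymbol{g}}_n^{(t,j)}\|^2$. By Assumption~\ref{ass:3} the stochastic gradient is unbiased with variance at most $\xi^2$, so $\mathbb{E}\|\tilde{\boldsymbol{g}}_n^{(t,j)}\|^2\le\xi^2+\mathbb{E}\|\nabla F_n(\boldsymbol{w}_n^{(t,j)})\|^2$. I would then split the batch gradient through the current global point, $\nabla F_n(\boldsymbol{w}_n^{(t,j)})=[\nabla F_n(\boldsymbol{w}_n^{(t,j)})-\nabla F_n(\boldsymbol{w}^{(t)})]+[\nabla F_n(\boldsymbol{w}^{(t)})-\nabla F(\boldsymbol{w}^{(t)})]+\nabla F(\boldsymbol{w}^{(t)})$, and apply $\|a+b+c\|^2\le 3(\|a\|^2+\|b\|^2+\|c\|^2)$. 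The three pieces are controlled respectively by $L$-smoothness (Assumption~\ref{ass:2}), which converts the first into $L^2\|\boldsymbol{w}_n^{(t,j)}-\boldsymbol{w}^{(t)}\|^2$; by the heterogeneity bound (Definition~\ref{def:1}) evaluated at $\boldsymbol{w}^{(t)}$, which caps the second by $\sigma_h^2$; and by the global gradient norm itself. This yields $\mathbb{E}\|\tilde{\boldsymbol{g}}_n^{(t,j)}\|^2\le 3L^2\,\mathbb{E}\|\boldsymbol{w}_n^{(t,j)}-\boldsymbol{w}^{(t)}\|^2+3\sigma_h^2+3\|\nabla F(\boldsymbol{w}^{(t)})\|^2+\xi^2$, where crucially the smoothness term reintroduces the very quantity being bounded.

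I would then sum over $\zeta=0,\dots,\phi-1$. Writing $S\triangleq\sum_{\zeta=0}^{\phi-1}\mathbb{E}\|\boldsymbol{w}^{(t)}-\boldsymbol{w}_n^{(t,\zeta)}\|^2$ and bounding $\zeta\le\phi$ together with $\sum_{j=0}^{\zeta-1}(\cdot)\le\sum_{j=0}^{\phi-1}(\cdot)$ collapses the resulting double sum into the self-referential inequality $S\le 3\lambda^2\phi^2 L^2\,S+\lambda^2\phi^3\big(\xi^2+3\sigma_h^2+3\|\nabla F(\boldsymbol{w}^{(t)})\|^2\big)$.

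The main obstacle is precisely this recursive $L^2$ feedback term, which must be absorbed rather than discarded. To close it I would invoke the step-size restriction of Theorem~\ref{theo:1}, namely $\lambda\le 1/\sqrt{6L^2\phi^3}$, which forces $3\lambda^2\phi^2 L^2\le 1/(2\phi)\le 1/4$ whenever $\phi\ge 2$, hence $1-3\lambda^2\phi^2 L^2\ge 3/4$. Dividing through then gives $S\le\tfrac{4}{3}\lambda^2\phi^3\big(\xi^2+3\sigma_h^2+3\|\nabla F(\boldsymbol{w}^{(t)})\|^2\big)$, which expands to exactly the claimed bound $\tfrac{4}{3}\lambda^2\xi^2\phi^3+4\lambda^2\sigma_h^2\phi^3+4\lambda^2\phi^3\|\nabla F(\boldsymbol{w}^{(t)})\|^2$, the $L^2$ dependence having been eliminated. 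I note that although Lemma~\ref{lem:2} lists only Assumptions~\ref{ass:2},~\ref{ass:3} and Definition~\ref{def:1}, this final absorption step tacitly relies on the learning-rate condition carried throughout the convergence analysis.
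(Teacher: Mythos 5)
Your proof is correct, and I verified the algebra end to end: the unrolling step, the bound $\mathbb{E}\|\tilde{\boldsymbol{g}}_n^{(t,j)}\|^2\leq \xi^2+3L^2\,\mathbb{E}\|\boldsymbol{w}_n^{(t,j)}-\boldsymbol{w}^{(t)}\|^2+3\sigma_h^2+3\|\nabla F(\boldsymbol{w}^{(t)})\|^2$, the collapse of the double sum to $\lambda^2\phi^2$ times the single sum, and the absorption $3\lambda^2\phi^2L^2\leq 1/(2\phi)\leq 1/4$ under $\lambda\leq 1/\sqrt{6L^2\phi^3}$ with $\phi\geq 2$ all hold, and the final expansion reproduces the stated coefficients $4/3$, $4$, $4$ exactly. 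One point of comparison is unusual here: the paper does not prove Lemma~\ref{lem:2} at all, but defers it to \cite{zou2022knowledge}, so there is no in-paper argument to match against; your derivation is the standard unroll-then-absorb client-drift bound used in that line of work, and the fact that it lands on precisely the claimed constants (rather than, say, the looser $5$/$30$-type constants from cruder drift lemmas) strongly indicates it is essentially the cited proof reconstructed. Two of your side remarks are also well taken and worth keeping: the lemma's stated hypotheses (Assumptions~\ref{ass:2}, \ref{ass:3}, Definition~\ref{def:1}) do not include the learning-rate restriction, yet the self-referential $L^2$ term cannot be closed without it, so the lemma implicitly inherits the step-size condition of Theorem~\ref{theo:1}; and the excluded case $\phi=1$ is vacuous anyway, since then $\boldsymbol{w}_n^{(t,0)}=\boldsymbol{w}^{(t)}$ makes the left-hand side zero, and the theorem's condition $\lambda\leq\sqrt{\phi-1}/(4L\phi)$ already forces $\phi\geq 2$.
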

	\begin{lemma}
		\label{lem:3}
		With Assumptions \ref{ass:3} and \ref{ass:4}, the aggregation error
		and the instantaneous MSE has the following relationship
		\begin{equation}
		\mathbb{E}[\|\boldsymbol{e}^{(t)}\|^2]\leq d\frac{\Gamma(K+1)}{K^2}\mathrm{MSE}^{(t)}.
		\end{equation}
	\end{lemma}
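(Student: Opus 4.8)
The plan is to start from the definition $\boldsymbol{e}^{(t)} = \frac{1}{K}\pi^{(t)}(\hat{\boldsymbol{z}}^{(t)} - \boldsymbol{z}^{(t)})$, so that $\mathbb{E}[\|\boldsymbol{e}^{(t)}\|^2] = \frac{(\pi^{(t)})^2}{K^2}\,\mathbb{E}[\|\hat{\boldsymbol{z}}^{(t)} - \boldsymbol{z}^{(t)}\|^2]$, reducing the lemma to controlling the normalized estimation error $\mathbb{E}[\|\hat{\boldsymbol{z}}^{(t)} - \boldsymbol{z}^{(t)}\|^2]$. Using the expanded form of $\hat{\boldsymbol{z}}^{(t)}$ and the fact that the ideal aggregate is $\boldsymbol{z}^{(t)} = \sum_{n\in\mathcal{S}^{(t)}}\boldsymbol{z}_n^{(t)}$ (which one checks by substituting $\hat{\boldsymbol{z}}^{(t)}=\boldsymbol{z}^{(t)}$ into \eqref{eq:egg} and recovering $\boldsymbol{\theta}^{(t)}$), I would write the error as a sum of $K$ per-device misalignment terms plus a single noise term, $\hat{\boldsymbol{z}}^{(t)} - \boldsymbol{z}^{(t)} = \sum_{n\in\mathcal{S}^{(t)}} b_n^{(t)}\boldsymbol{z}_n^{(t)} + \frac{\boldsymbol{n}^{(t)}}{\sqrt{\eta^{(t)}}}$, where $b_n^{(t)} = \frac{\sqrt{\alpha_n^{(t)}P_n^{\mathrm{max}}}\,h_n^{(t)}}{\sqrt{\eta^{(t)}}} - 1$ is exactly the misalignment coefficient appearing in $\mathrm{MSE}^{(t)}$ of \eqref{eq:MSEt}.

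Next I would apply the power-mean (Cauchy--Schwarz) inequality $\|\sum_{i=1}^{K+1}\boldsymbol{v}_i\|^2 \le (K+1)\sum_{i=1}^{K+1}\|\boldsymbol{v}_i\|^2$ to this sum of $K+1$ vectors; this is the step that produces the factor $(K+1)$ in the target bound. Taking expectations and invoking that the AWGN $\boldsymbol{n}^{(t)}$ is zero-mean, i.i.d. with per-coordinate variance $\sigma^2$ and independent of the transmitted symbols, the cross terms vanish and the noise term contributes $\mathbb{E}\|\boldsymbol{n}^{(t)}/\sqrt{\eta^{(t)}}\|^2 = d\sigma^2/\eta^{(t)}$. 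The bound then reads $(K+1)\big[\sum_{n\in\mathcal{S}^{(t)}}(b_n^{(t)})^2\,\mathbb{E}\|\boldsymbol{z}_n^{(t)}\|^2 + d\sigma^2/\eta^{(t)}\big]$, and I would recognize $\sum_n (b_n^{(t)})^2$ together with $\sigma^2/\eta^{(t)}$ as precisely the two pieces of $\mathrm{MSE}^{(t)}$.

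The final ingredient is to control the normalized signal energy jointly with the denormalization prefactor, i.e.\ to establish $(\pi^{(t)})^2\,\mathbb{E}\|\boldsymbol{z}_n^{(t)}\|^2 \le d\Gamma$ for each $n$. This I would split into $\mathbb{E}\|\boldsymbol{z}_n^{(t)}\|^2 \le d$, coming from the unit per-entry normalization of each transmitted update (with Assumption \ref{ass:3} underpinning the expectation over the stochastic per-device gradients), and $(\pi^{(t)})^2 \le \Gamma$, which follows directly from Assumption \ref{ass:4} and the definition of the weighted variance, since $(\pi^{(t)})^2 = \frac{1}{K}\sum_{n\in\mathcal{S}^{(t)}}(\pi_n^{(t)})^2 \le \Gamma$. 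Collecting the prefactor $(\pi^{(t)})^2/K^2$, the $(K+1)$ factor, the coordinate dimension $d$, and the variance bound $\Gamma$ then yields $\mathbb{E}[\|\boldsymbol{e}^{(t)}\|^2] \le \frac{d\Gamma(K+1)}{K^2}\,\mathrm{MSE}^{(t)}$.

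The main obstacle I expect is the bookkeeping around the normalization, not any single inequality. One must carefully reconcile the normalization convention used for $\boldsymbol{z}_n^{(t)}$ with Assumption \ref{ass:4} so that the per-entry variance bound $\Gamma$ controls both the normalized signal energy $\|\boldsymbol{z}_n^{(t)}\|^2$ and the denormalization scalar $(\pi^{(t)})^2$ in the product $(\pi^{(t)})^2\|\boldsymbol{z}_n^{(t)}\|^2$. Getting the factor $d$ exactly right, rather than a looser multiple, relies on the power-mean split keeping each misalignment term separately proportional to $\|\boldsymbol{z}_n^{(t)}\|^2$ and on the noise energy scaling cleanly as $d\sigma^2/\eta^{(t)}$; this dimension-and-scaling accounting is where the argument demands the most care.
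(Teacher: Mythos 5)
First, note that the paper contains no in-house proof of Lemma \ref{lem:3}: it defers Lemmas \ref{lem:1}--\ref{lem:3} to \cite{zou2022knowledge}. Measured against the canonical argument that the lemma is imported from, your skeleton is the right one and reproduces the constant exactly: starting from $\boldsymbol{e}^{(t)}=\frac{1}{K}\pi^{(t)}(\hat{\boldsymbol{z}}^{(t)}-\boldsymbol{z}^{(t)})$ with $\boldsymbol{z}^{(t)}=\sum_{n\in\mathcal{S}^{(t)}}\boldsymbol{z}_n^{(t)}$ (your consistency check via \eqref{eq:egg} is correct under uniform weights), splitting the error into the $K$ misalignment vectors $b_n^{(t)}\boldsymbol{z}_n^{(t)}$ plus the noise vector, applying the $(K+1)$-term power-mean inequality, evaluating $\mathbb{E}\|\boldsymbol{n}^{(t)}/\sqrt{\eta^{(t)}}\|^2=d\sigma^2/\eta^{(t)}$, and closing with $(\pi^{(t)})^2=\frac{1}{K}\sum_{n\in\mathcal{S}^{(t)}}(\pi_n^{(t)})^2\le\Gamma$ from Assumption \ref{ass:4}. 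One small redundancy: after the power-mean split over all $K+1$ vectors there are no cross terms left, so zero-mean/independence of the noise is needed only to compute its energy; alternatively you could peel off the noise exactly first and apply the power-mean to the $K$ signal terms alone, which would even give a factor $K$ rather than $K+1$ on the misalignment sum.

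The genuine gap is the step you asserted but did not prove: $(\pi^{(t)})^2\,\mathbb{E}\|\boldsymbol{z}_n^{(t)}\|^2\le d\Gamma$ via ``$\mathbb{E}\|\boldsymbol{z}_n^{(t)}\|^2\le d$.'' Under the paper's own normalization \eqref{eq:zkt}, which subtracts the \emph{global} scalar mean $\bar{\theta}^{(t)}$ and divides by the \emph{global} $\pi^{(t)}$, a direct computation (using $\sum_{j}(\theta_{n,j}^{(t)}-\bar{\theta}_n^{(t)})=0$) gives
\begin{equation*}
\|\boldsymbol{z}_n^{(t)}\|^2=\frac{d\left[(\pi_n^{(t)})^2+(\bar{\theta}_n^{(t)}-\bar{\theta}^{(t)})^2\right]}{(\pi^{(t)})^2},
\qquad\text{hence}\qquad
(\pi^{(t)})^2\|\boldsymbol{z}_n^{(t)}\|^2=d(\pi_n^{(t)})^2+d(\bar{\theta}_n^{(t)}-\bar{\theta}^{(t)})^2,
\end{equation*}
and the mean-mismatch term $(\bar{\theta}_n^{(t)}-\bar{\theta}^{(t)})^2$ is controlled by neither Assumption \ref{ass:3} nor \ref{ass:4}: whenever device means are heterogeneous, $\|\boldsymbol{z}_n^{(t)}\|^2>d$ and your product bound fails. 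You explicitly flagged this normalization reconciliation as the ``main obstacle,'' which is exactly the right diagnosis, but flagging is not resolving. The step does hold with equality if each device normalizes with its \emph{own} statistics, $\boldsymbol{z}_n^{(t)}=(\boldsymbol{\theta}_n^{(t)}-\bar{\theta}_n^{(t)}\mathbf{1})/\pi_n^{(t)}$, so that $\|\boldsymbol{z}_n^{(t)}\|^2=d$ identically and $(\pi^{(t)})^2\|\boldsymbol{z}_n^{(t)}\|^2=d(\pi^{(t)})^2\le d\Gamma$, or under the modeling convention (used in the source the lemma is cited from) that the transmitted symbols are zero-mean unit-variance per entry. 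Under either convention your chain of inequalities closes and yields $\frac{d\Gamma(K+1)}{K^2}\mathrm{MSE}^{(t)}$; under the paper's literal \eqref{eq:zkt} it does not, so your write-up should either adopt the per-device normalization explicitly or state the unit-second-moment assumption on $\boldsymbol{z}_n^{(t)}$ --- a latent inconsistency you have inherited from the paper's own definitions rather than introduced.
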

	\begin{lemma}
		\label{lem:4}
		With Assumptions \ref{ass:2} and \ref{ass:41}, the following inequality holds:
		\begin{equation}
		\begin{aligned}
		\langle \nabla F(\boldsymbol{v}^{(t+1)}),\ \boldsymbol{w}^{(t+1)} - \boldsymbol{v}^{(t+1)} \rangle &\leq \frac{\left\| \nabla F(\boldsymbol{w}^{(t)}) \right\|^2 + L^2 \lambda^2 G^2}{2} \\
		&+ \left\| \boldsymbol{w}^{(t+1)} - \boldsymbol{v}^{(t+1)} \right\|^2.
		\end{aligned}
		\end{equation}
		\noindent\textit{Remark.} A similar inequality can be derived for the term \(\langle \nabla F(\boldsymbol{w}^{(t+1)}), \boldsymbol{v}^{(t+1)} - \boldsymbol{w}^{(t+1)} \rangle\). Specifically, using similar steps as in the proof of Lemma \ref{lem:4}, we can obtain the following bound:
		\begin{equation}
		\begin{aligned}
		\langle \nabla F(\boldsymbol{w}^{(t+1)}), \boldsymbol{v}^{(t+1)} - \boldsymbol{w}^{(t+1)} \rangle &\leq \frac{\left\| \nabla F(\boldsymbol{w}^{(t)}) \right\|^2 + L^2 \lambda^2 G^2}{2} \\
		&\quad + \left\| \boldsymbol{v}^{(t+1)} - \boldsymbol{w}^{(t+1)} \right\|^2.
		\end{aligned}
		\end{equation}	
	\end{lemma}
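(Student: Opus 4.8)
The plan is to bound the inner product by a standard ``add-and-subtract plus Young's inequality'' argument, anchored at the previous global iterate $\boldsymbol{w}^{(t)}$ so that the smoothness constant $L$ and the gradient bound $G^2$ enter exactly as written. First I would write $\nabla F(\boldsymbol{v}^{(t+1)}) = \nabla F(\boldsymbol{w}^{(t)}) + \bigl(\nabla F(\boldsymbol{v}^{(t+1)}) - \nabla F(\boldsymbol{w}^{(t)})\bigr)$ and split the target inner product into $\langle \nabla F(\boldsymbol{w}^{(t)}),\, \boldsymbol{w}^{(t+1)} - \boldsymbol{v}^{(t+1)}\rangle$ and $\langle \nabla F(\boldsymbol{v}^{(t+1)}) - \nabla F(\boldsymbol{w}^{(t)}),\, \boldsymbol{w}^{(t+1)} - \boldsymbol{v}^{(t+1)}\rangle$.

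To each of the two resulting inner products I would apply Young's inequality $\langle \boldsymbol{a}, \boldsymbol{b}\rangle \leq \tfrac12\|\boldsymbol{a}\|^2 + \tfrac12\|\boldsymbol{b}\|^2$. The two copies of $\tfrac12\|\boldsymbol{w}^{(t+1)} - \boldsymbol{v}^{(t+1)}\|^2$ then combine into the single $\|\boldsymbol{w}^{(t+1)} - \boldsymbol{v}^{(t+1)}\|^2$ on the right-hand side, while the first inner product contributes $\tfrac12\|\nabla F(\boldsymbol{w}^{(t)})\|^2$, which is precisely the first half of the numerator of the claimed bound. It then remains to absorb the cross term $\tfrac12\|\nabla F(\boldsymbol{v}^{(t+1)}) - \nabla F(\boldsymbol{w}^{(t)})\|^2$ into $\tfrac12 L^2\lambda^2 G^2$.

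For that last term I would invoke the $L$-smoothness of $F$ (Assumption \ref{ass:2}, which carries over to $F = \sum_n q_n F_n$ as a convex combination of $L$-smooth functions) to get $\|\nabla F(\boldsymbol{v}^{(t+1)}) - \nabla F(\boldsymbol{w}^{(t)})\| \leq L\,\|\boldsymbol{v}^{(t+1)} - \boldsymbol{w}^{(t)}\|$, and then control the displacement of the virtual full-participation iterate. Since $\boldsymbol{v}^{(t+1)}$ is obtained from $\boldsymbol{w}^{(t)}$ by a gradient step of size $\lambda$, one has $\|\boldsymbol{v}^{(t+1)} - \boldsymbol{w}^{(t)}\|^2 \leq \lambda^2 G^2$ after taking expectations and applying the gradient-norm bound of Assumption \ref{ass:41} (with Jensen's inequality passing from the per-device bound on $\mathbb{E}\|\nabla F_n\|^2$ to a bound on the aggregated gradient). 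Collecting the three contributions then yields the stated inequality.

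The main obstacle I anticipate is this displacement bound: pinning down the precise form of $\boldsymbol{v}^{(t+1)} - \boldsymbol{w}^{(t)}$ and confirming that its squared norm is controlled by $\lambda^2 G^2$ with no stray $\phi$ factor, while keeping the expectation sense consistent with Assumption \ref{ass:41} (the inequality should be read in the same expected sense). The Remark then follows by repeating the argument with the roles of $\boldsymbol{v}^{(t+1)}$ and $\boldsymbol{w}^{(t+1)}$ exchanged; one again anchors the smoothness step at $\boldsymbol{w}^{(t)}$, now bounding $\|\boldsymbol{w}^{(t+1)} - \boldsymbol{w}^{(t)}\|^2 \leq \lambda^2 G^2$ via the actual partial-participation update rule, so that the same $\tfrac12\|\nabla F(\boldsymbol{w}^{(t)})\|^2$ and $\tfrac12 L^2\lambda^2 G^2$ terms reappear.
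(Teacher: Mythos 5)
Your proof is correct and lands exactly on the stated bound, but it takes a genuinely different decomposition from the paper's. You split the gradient first, writing $\nabla F(\boldsymbol{v}^{(t+1)}) = \nabla F(\boldsymbol{w}^{(t)}) + \bigl(\nabla F(\boldsymbol{v}^{(t+1)}) - \nabla F(\boldsymbol{w}^{(t)})\bigr)$, and apply Young's inequality with unit weight to each resulting inner product; the paper instead keeps the inner product whole, passes to its absolute value via Cauchy--Schwarz, applies Young's with a free parameter $\varepsilon$, separately derives $\|\nabla F(\boldsymbol{v}^{(t+1)})\|^2 \leq 2\|\nabla F(\boldsymbol{w}^{(t)})\|^2 + 2L^2\lambda^2 G^2$ (triangle inequality plus a second Young step with $\delta = 1$), and only then tunes $\varepsilon = 2$ so the constants fall into place. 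Your route buys economy: it dispenses with the paper's $\alpha,\beta,\varepsilon$ bookkeeping and the absolute-value detour, and makes transparent why the coefficient of $\|\boldsymbol{w}^{(t+1)} - \boldsymbol{v}^{(t+1)}\|^2$ is exactly $1$ (two halves from two Young applications). The essential ingredients are shared: $L$-smoothness anchored at $\boldsymbol{w}^{(t)}$ (which indeed transfers from Assumption \ref{ass:2} on the $F_n$ to $F$ as a convex combination, just as the paper implicitly uses) and the displacement bound $\|\boldsymbol{v}^{(t+1)} - \boldsymbol{w}^{(t)}\| \leq \lambda G$. The obstacle you flag — a possible stray $\phi$ factor in that displacement — is legitimate, and the paper resolves it by fiat rather than argument: it writes $\boldsymbol{v}^{(t+1)} = \boldsymbol{w}^{(t)} - \lambda\boldsymbol{\Theta}^{(t)}$ with $\boldsymbol{\Theta}^{(t)}$ declared to be the \emph{average} of local gradients and asserts $\|\boldsymbol{\Theta}^{(t)}\|^2 \leq G^2$, i.e., the same deterministic reading of Assumption \ref{ass:41} (via Jensen over the average) that you adopt, so your treatment is at the paper's own level of rigor. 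One small caveat on your Remark: bounding $\|\boldsymbol{w}^{(t+1)} - \boldsymbol{w}^{(t)}\|$ through the actual update rule strictly involves the aggregation error $\boldsymbol{e}^{(t)}$ in addition to the gradient term, so $\lambda^2 G^2$ alone does not follow from Assumption \ref{ass:41} without further gloss — but the paper offers nothing more than ``similar steps'' there either, so this gap belongs to the paper, not to you.
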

	
	\noindent\textit{\textbf{Proof of Theorem \ref{theo:1}}}: $F(\boldsymbol{w})$ is $L$-smooth and we have the following inequality
	\begin{equation}
	\begin{aligned}
	&F(\boldsymbol{w}^{(t+1)}) - F(\boldsymbol{w}^{(t)}) = \left[F(\boldsymbol{w}^{(t+1)}) - F(\boldsymbol{v}^{(t+1)})\right] \\
	& + \left[F(\boldsymbol{v}^{(t+1)}) - F(\boldsymbol{w}^{(t+1)})\right] + \left[F(\boldsymbol{w}^{(t+1)}) - F(\boldsymbol{w}^{(t)})\right] \\
	&\stackrel{(a)}{\leq} \langle \nabla F(\boldsymbol{v}^{(t+1)}), \boldsymbol{w}^{(t+1)} - \boldsymbol{v}^{(t+1)} \rangle + \frac{L}{2} \|\boldsymbol{w}^{(t+1)} - \boldsymbol{v}^{(t+1)}\|^2 \\
	& + \langle \nabla F(\boldsymbol{w}^{(t+1)}), \boldsymbol{v}^{(t+1)} - \boldsymbol{w}^{(t+1)} \rangle + \frac{L}{2} \|\boldsymbol{v}^{(t+1)} - \boldsymbol{w}^{(t+1)}\|^2 \\
	& + \langle \nabla F(\boldsymbol{w}^{(t)}), \boldsymbol{w}^{(t+1)} - \boldsymbol{w}^{(t)} \rangle + \frac{L}{2} \|\boldsymbol{w}^{(t+1)} - \boldsymbol{w}^{(t)}\|^2 \\
	&\stackrel{(b)}{\leq} \langle \nabla F(\boldsymbol{v}^{(t+1)}), \boldsymbol{w}^{(t+1)} - \boldsymbol{v}^{(t+1)} \rangle \\
	&+ \langle \nabla F(\boldsymbol{w}^{(t+1)}), \boldsymbol{v}^{(t+1)} - \boldsymbol{w}^{(t+1)} \rangle \\
	& + L \|\boldsymbol{w}^{(t+1)} - \boldsymbol{v}^{(t+1)}\|^2 - \lambda \langle\nabla F(\boldsymbol{w}^{(t)}),\boldsymbol{\theta}^{(t)}\rangle \\
	& + \frac{\lambda}{2}\|\nabla F(\boldsymbol{w}^{(t)})\|^2 + \left(\frac{\lambda}{2} + \lambda^2L\right)\|\boldsymbol{e}^{(t)}\|^2 + \lambda^2L\|\boldsymbol{\theta}^{(t)}\|^2 \\
	&\stackrel{(c)}{\leq} \left( 2 + L \right) \|\boldsymbol{w}^{(t+1)} - \boldsymbol{v}^{(t+1)}\|^2 - \lambda \langle\nabla F(\boldsymbol{w}^{(t)}),\boldsymbol{\theta}^{(t)}\rangle \\
	& + \left(1+\frac{\lambda}{2}\right)\|\nabla F(\boldsymbol{w}^{(t)})\|^2 + \left( \frac{\lambda}{2} + \lambda^2L \right)\|\boldsymbol{e}^{(t)}\|^2 \\
	& + \lambda^2L\|\boldsymbol{\theta}^{(t)}\|^2 + L^2 \lambda^2 G^2.
	\end{aligned}
	\end{equation}
	where $(a)$ follows from (\ref{eq:lsmooth}), $(b)$ follows from \cite{zou2022knowledge},  $(c)$ follows from Lemma \ref{lem:4}. By taking expectations over
	stochastic sampling and receiver noise at both sides, we obtain
	\begin{equation}
	\begin{aligned}
	&\mathbb{E}[F(\boldsymbol{w}^{(t+1)}) - F(\boldsymbol{w}^{(t)})] \leq \left( 2 + L \right) \mathbb{E} \|\boldsymbol{w}^{(t+1)} - \boldsymbol{v}^{(t+1)}\|^2 \\
	&\quad - \lambda \mathbb{E}\left[\langle\nabla F(\boldsymbol{w}^{(t)}),\boldsymbol{\theta}^{(t)}\rangle\right] + \left(1 + \frac{\lambda}{2}\right) \|\nabla F(\boldsymbol{w}^{(t)})\|^2 \\
	&\quad + \left(\frac{\lambda}{2} + \lambda^2 L\right) \mathbb{E}[\|\boldsymbol{e}^{(t)}\|^2] + \lambda^2 L \mathbb{E}[\|\boldsymbol{\theta}^{(t)}\|^2] + L^2 \lambda^2 G^2.
	\end{aligned}
	\end{equation}
	
	Combined with Assumption \ref{ass:5}, Lemma \ref{lem:1}, \ref{lem:2}, \ref{lem:3} and $\lambda\leq \min\{\frac1{2L\phi},\frac1{\sqrt{6L^2\phi^3}},\frac{\sqrt{\phi-1}}{4L\phi}\}$, we obtain
	\begin{equation}
	\label{eq:single}
	\begin{aligned}
	&\mathbb{E}[F(\boldsymbol{w}^{(t+1)}) - F(\boldsymbol{w}^{(t)})] \leq \\
	&\quad 4 \sigma_h^2 \left( \frac{(2 + L) (N - K)}{(N - 1) K} \lambda \phi^2 + \lambda^3 L^2 \phi^3 \right) \\
	&\quad + \left(1 - \frac{\lambda(\phi-1)}{4}\right)\|\nabla F(\boldsymbol{w}^{(t)})\|^2 \\
	&\quad + \left(\frac{4\lambda^3L^2\phi^3}{3} + \lambda\phi\right)\xi^2 \\
	&\quad + \left(\frac{\lambda}{2} + \lambda^2L\right)d\frac{\Gamma(K+1)}{K^2}\mathrm{MSE}^{(t)} + L^2 \lambda^2 G^2.
	\end{aligned}
	\end{equation}
	
	By summing up (\ref{eq:single}) for all $T$ communication rounds and rearranging the terms, we have
	\begin{equation}
	\begin{aligned}
	&\mathbb{E}[F(\boldsymbol{w}^{(T)}) - F(\boldsymbol{w}^{(0)})] \leq \\ &\left(1 - \frac{\lambda(\phi-1)}{4}\right) \sum_{t=0}^{T-1} \|\nabla F(\boldsymbol{w}^{(t)})\|^2 \\
	&\quad + \left(\frac{4\lambda^3 L^2 \phi^3}{3} + \lambda \phi\right) \xi^2 T + L^2 \lambda^2 G^2 T \\
	&\quad + 4 \sigma_h^2 \left( \frac{(1 + L) (N - K)}{(N - 1) K} \lambda \phi^2 + \lambda^3 L^2 \phi^3 \right) T \\
	&\quad + \left(\frac{\lambda}{2} + \lambda^2 L\right) \sum_{t=0}^{T-1} d\frac{\Gamma(K+1)}{K^2} \mathrm{MSE}^{(t)}.
	\end{aligned}
	\end{equation}

	With Assumption \ref{ass:1}, we have $F(\boldsymbol{w}^{(T)}) - F(\boldsymbol{w}^{(0)}) \geq F(\boldsymbol{w}^*) - F(\boldsymbol{w}^{(0)})$, which yields the result.

	\subsection{Proof of Lemma \ref{lem:4}}
	We aim to bound the inner product \( \langle \nabla F(\boldsymbol{v}^{(t+1)}),\ \boldsymbol{w}^{(t+1)} - \boldsymbol{v}^{(t+1)} \rangle \) in terms of \( \left\| \nabla F(\boldsymbol{w}^{(t)}) \right\|^2 \) and known constants. Using the Cauchy-Schwarz inequality, we can obtain:
	\begin{equation}
	\begin{aligned}
	&\left| \langle \nabla F(\boldsymbol{v}^{(t+1)}),\ \boldsymbol{w}^{(t+1)} - \boldsymbol{v}^{(t+1)} \rangle \right| \leq\\& \left\| \nabla F(\boldsymbol{v}^{(t+1)}) \right\| \cdot \left\| \boldsymbol{w}^{(t+1)} - \boldsymbol{v}^{(t+1)} \right\|.
	\end{aligned}
	\end{equation}

	Next, we apply Young's inequality for any \( \varepsilon > 0 \):
	\begin{equation}
	ab \leq \frac{a^2}{2\varepsilon} + \frac{\varepsilon b^2}{2}.
	\end{equation}
	
	Let \( a = \left\| \nabla F(\boldsymbol{v}^{(t+1)}) \right\| \) and \( b = \left\| \boldsymbol{w}^{(t+1)} - \boldsymbol{v}^{(t+1)} \right\| \). Then:
	\begin{equation}
	\begin{aligned}
	& \left\| \nabla F(\boldsymbol{v}^{(t+1)}) \right\| \cdot \left\| \boldsymbol{w}^{(t+1)} - \boldsymbol{v}^{(t+1)} \right\| \leq  \\
	&\frac{1}{2\varepsilon} \left\| \nabla F(\boldsymbol{v}^{(t+1)}) \right\|^2 + \frac{\varepsilon}{2} \left\| \boldsymbol{w}^{(t+1)} - \boldsymbol{v}^{(t+1)} \right\|^2.
	\end{aligned}
	\end{equation}
	
	Thus,
	\begin{equation}
	\begin{aligned}
	& \left| \langle \nabla F(\boldsymbol{v}^{(t+1)}),\ \boldsymbol{w}^{(t+1)} - \boldsymbol{v}^{(t+1)} \rangle \right| \leq \\
	& \frac{1}{2\varepsilon} \left\| \nabla F(\boldsymbol{v}^{(t+1)}) \right\|^2 
	+ \frac{\varepsilon}{2} \left\| \boldsymbol{w}^{(t+1)} - \boldsymbol{v}^{(t+1)} \right\|^2.
	\end{aligned}
	\end{equation}
	
	Using Assumption \ref{ass:2},
	\begin{equation}
	\left\| \nabla F(\boldsymbol{v}^{(t+1)}) - \nabla F(\boldsymbol{w}^{(t)}) \right\| \leq L \left\| \boldsymbol{v}^{(t+1)} - \boldsymbol{w}^{(t)} \right\|.
	\end{equation}
	
	Therefore,
	\begin{equation}
	\left\| \nabla F(\boldsymbol{v}^{(t+1)}) \right\| \leq \left\| \nabla F(\boldsymbol{w}^{(t)}) \right\| + L \left\| \boldsymbol{v}^{(t+1)} - \boldsymbol{w}^{(t)} \right\|.
	\end{equation}
	
	Squaring both sides,
	\begin{equation}
	\left\| \nabla F(\boldsymbol{v}^{(t+1)}) \right\|^2 \leq \left( \left\| \nabla F(\boldsymbol{w}^{(t)}) \right\| + L \left\| \boldsymbol{v}^{(t+1)} - \boldsymbol{w}^{(t)} \right\| \right)^2.
	\end{equation}
	
	Expanding the right-hand side:
	\begin{equation}
	\begin{aligned}
	\left\| \nabla F(\boldsymbol{v}^{(t+1)}) \right\|^2 &\leq \left\| \nabla F(\boldsymbol{w}^{(t)}) \right\|^2 \\
	&\quad + 2 L \left\| \nabla F(\boldsymbol{w}^{(t)}) \right\| \left\| \boldsymbol{v}^{(t+1)} - \boldsymbol{w}^{(t)} \right\| \\
	&\quad + L^2 \left\| \boldsymbol{v}^{(t+1)} - \boldsymbol{w}^{(t)} \right\|^2.
	\end{aligned}
	\end{equation}
	
	To eliminate the mixed term, we apply Young's inequality again with \( \delta > 0 \):
	\begin{equation}
	2ab \leq \frac{a^2}{\delta} + \delta b^2.
	\end{equation}
	
	Let \( a = \left\| \nabla F(\boldsymbol{w}^{(t)}) \right\| \), \( b = L \left\| \boldsymbol{v}^{(t+1)} - \boldsymbol{w}^{(t)} \right\| \), and choose \( \delta = 1 \):
	\begin{equation}
	\begin{aligned}
	2 L \left\| \nabla F(\boldsymbol{w}^{(t)}) \right\| \left\| \boldsymbol{v}^{(t+1)} - \boldsymbol{w}^{(t)} \right\| &\leq \left\| \nabla F(\boldsymbol{w}^{(t)}) \right\|^2 \\
	&\quad + L^2 \left\| \boldsymbol{v}^{(t+1)} - \boldsymbol{w}^{(t)} \right\|^2.
	\end{aligned}
	\end{equation}
	
	Substitute back:
	\begin{equation}
	\left\| \nabla F(\boldsymbol{v}^{(t+1)}) \right\|^2 \leq \alpha \left\| \nabla F(\boldsymbol{w}^{(t)}) \right\|^2 + \beta,
	\end{equation}
	where $\alpha=2$, $\beta=2 L^2 \left\| \boldsymbol{v}^{(t+1)} - \boldsymbol{w}^{(t)} \right\|^2$.

	Since \( \boldsymbol{v}^{(t+1)} = \boldsymbol{w}^{(t)} - \lambda \boldsymbol{\Theta}^{(t)} \), where \( \boldsymbol{\Theta}^{(t)} \) is the average of local gradients, we have:
	\begin{equation}
	\left\| \boldsymbol{v}^{(t+1)} - \boldsymbol{w}^{(t)} \right\|^2 = \lambda^2 \left\| \boldsymbol{\Theta}^{(t)} \right\|^2 \leq \lambda^2 G^2,
	\end{equation}
	
	Therefore,
	\begin{equation}
	\beta = 2 L^2 \lambda^2 G^2.
	\end{equation}

	Returning to the inner product bound:
	\begin{equation}
	\begin{aligned}
	&  \left| \langle \nabla F(\boldsymbol{v}^{(t+1)}),\ \boldsymbol{w}^{(t+1)} - \boldsymbol{v}^{(t+1)} \rangle \right| \leq \\
	&\frac{1}{2\varepsilon} \left\| \nabla F(\boldsymbol{v}^{(t+1)}) \right\|^2  + \frac{\varepsilon}{2} \left\| \boldsymbol{w}^{(t+1)} - \boldsymbol{v}^{(t+1)} \right\|^2.
	\end{aligned}
	\end{equation}
	
	Choose \( \varepsilon = \alpha \). Then:
	\begin{equation}
	\begin{aligned}
	\frac{1}{2\varepsilon} \left\| \nabla F(\boldsymbol{v}^{(t+1)}) \right\|^2 &\leq \frac{1}{2\alpha} \left( \alpha \left\| \nabla F(\boldsymbol{w}^{(t)}) \right\|^2 + \beta \right) \\
	&= \frac{1}{2} \left\| \nabla F(\boldsymbol{w}^{(t)}) \right\|^2 + \frac{\beta}{2\alpha}.
	\end{aligned}
	\end{equation}
	
	Thus:
	\begin{equation}
	\begin{aligned}
	&\left| \langle \nabla F(\boldsymbol{v}^{(t+1)}),\ \boldsymbol{w}^{(t+1)} - \boldsymbol{v}^{(t+1)} \rangle \right| \leq\\ 
	&\frac{1}{2} \left\| \nabla F(\boldsymbol{w}^{(t)}) \right\|^2 + \frac{\beta}{2\alpha} + \frac{\alpha}{2} \left\| \boldsymbol{w}^{(t+1)} - \boldsymbol{v}^{(t+1)} \right\|^2.
	\end{aligned}
	\end{equation}
	
	According to the relationship 
	\begin{align}
	\begin{aligned}
	&\langle \nabla F(\boldsymbol{v}^{(t+1)}), \boldsymbol{w}^{(t+1)} - \boldsymbol{v}^{(t+1)} \rangle \leq \\
	&\left| \langle \nabla F(\boldsymbol{v}^{(t+1)}), \boldsymbol{w}^{(t+1)} - \boldsymbol{v}^{(t+1)} \rangle \right|,
	\end{aligned}
	\end{align}
	we obtain
	\begin{equation}
	\begin{aligned}
	&\langle \nabla F(\boldsymbol{v}^{(t+1)}),\ \boldsymbol{w}^{(t+1)} - \boldsymbol{v}^{(t+1)} \rangle  \leq \\
	&\frac{1}{2} \left\| \nabla F(\boldsymbol{w}^{(t)}) \right\|^2 + \frac{\beta}{2\alpha} + \frac{\alpha}{2} \left\| \boldsymbol{w}^{(t+1)} - \boldsymbol{v}^{(t+1)} \right\|^2.
	\end{aligned}
	\end{equation}
	Thus the Lemma \ref{lem:4} is proved.
	\subsection{Proof of Theorem 2}
	If \(\rho_{n}^{(t)*} = 0\), the constraint \(\alpha_{n}^{(t)*} \leq 1\) is not tight. The optimal $\alpha_{n}^{(t)*}$ is given by the analytical expression derived from the derivative condition:
	\begin{equation}
	\label{eq:85}
	\alpha_{n}^{(t)*} =\left(\frac{\sqrt{\eta^{(t)}}|h_n^{(t)}|}{\sqrt{P_n^{\mathrm{max}}}\left(|h_n^{(t)}|^2+\gamma_n^*\eta^{(t)}\right)}\right)^2.
	\end{equation}
	This solution must be verified against the constraint \(\alpha_{n}^{(t)*} \leq 1\): If the analytical solution satisfies \(\alpha_{n}^{(t)*} \leq 1\), the assumption that \(\rho_{n}^{(t)*} = 0\) holds, and the solution is directly given by (\ref{eq:85}). 
	If the analytical solution violates the constraint, i.e.,  \(\alpha_{n}^{(t)*} > 1\), the assumption \(\rho_{n}^{(t)*} = 0\) is invalid, and \(\rho_{n}^{(t)*}\) must be adjusted to ensure compliance with the constraint.
	
	If  \(\rho_{n}^{(t)*} > 0\) and the analytical solution exceeds the upper limit, i.e.,\(\left(\frac{\sqrt{\eta^{(t)}}|h_n^{(t)}|}{\sqrt{P_n^{\mathrm{max}}}\left(|h_n^{(t)}|^2+\gamma_n^*\eta^{(t)}\right)}\right)^2>1\), the constraint \(\alpha_{n}^{(t)*} \leq 1\) becomes active. By the complementary slackness condition: \(\rho_{n}^{(t)*} \left( \alpha_{n}^{(t)*} - 1 \right) = 0\),
	\(\rho_{n}^{(t)*}\) must be strictly positive (\(\rho_{n}^{(t)*} > 0\)) to enforce the constraint. Consequently,  \(\alpha_{n}^{(t)*}\) is directly set to the upper limit: $\alpha_{n}^{(t)*} = 1$.

	The role of \(\rho_{n}^{(t)*}\) is to enforce the constraint \(\alpha_{n}^{(t)*} \leq 1\). However, the final value of \(\alpha_{n}^{(t)*}\) is independent of the exact value of \(\rho_{n}^{(t)*}\). Therefore, \(\alpha_{n}^{(t)*}\) can be expressed as:
	\begin{equation}
	\alpha_{n}^{(t)*} = \min\left\{\left(\frac{\sqrt{\eta^{(t)}}|h_n^{(t)}|}{\sqrt{P_n^{\mathrm{max}}}\left(|h_n^{(t)}|^2+\gamma_n^*\eta^{(t)}\right)}\right)^2, 1\right\}.
	\end{equation}

	If \(\gamma_n^* > 0\), the complementary slackness condition $\gamma_n^* \left( \sum_{t=0}^{T-1} \alpha_{n}^{(t)*}  - \frac{T\bar{P}_n}{P_n^{\mathrm{max}}} \right) = 0$ ensures that the average power constraint is tight, i.e., $\sum_{t=0}^{T-1} \alpha_{n}^{(t)*}  = \frac{T\bar{P}_n}{P_n^{\mathrm{max}}}$.
	In this case, the value of \(\gamma_n^*\) can be determined using a bisection search method to enforce the equality.
	
	If \(\gamma_n^* = 0\), two scenarios arise: If \(\sum_{t=0}^{T-1} \alpha_{n}^{(t)*}  \leq \frac{T\bar{P}_n}{P_n^{\mathrm{max}}}\), the average power constraint is naturally satisfied, and no adjustment to \(\gamma_n^*\) is needed. If \(\sum_{t=0}^{T-1} \alpha_{n}^{(t)*}  > \frac{T\bar{P}_n}{P_n^{\mathrm{max}}}\), the average power constraint is violated, which contradicts the assumption \(\gamma_n^* = 0\). Therefore, in this case, \(\gamma_n^*\) must be strictly positive to tighten the constraint. 
	
	In summary, the optimal solution of $\alpha_{n}^{(t)*}$ is:
	\begin{equation}
	\left.\alpha_{n}^{(t)*}=\begin{cases}\min\left\{\frac{\eta^{(t)}}{P_n^{\mathrm{max}}|h_n^{(t)}|^2},1\right\},\\[2ex]\text{if }\sum_{t=0}^{T-1}\min\left\{\frac{\eta^{(t)}}{P_n^{\mathrm{max}}|h_n^{(t)}|^2},1\right\}< \frac{T\bar{P}_n}{P_n^{\mathrm{max}}},\\[2ex]\min\left\{\left(\frac{\sqrt{\eta^{(t)}}|h_n^{(t)}|}{\sqrt{P_n^{\mathrm{max}}}\left(|h_n^{(t)}|^2+\gamma_n^*\eta^{(t)}\right)}\right)^2,1\right\},\\[2ex]\text{if }\sum_{t=0}^{T-1}\min\left\{\frac{\eta^{(t)}}{P_n^{\mathrm{max}}|h_n^{(t)}|^2},1\right\}\geq\frac{T\bar{P}_n}{P_n^{\mathrm{max}}},\end{cases}\right.
	\end{equation}
	where $\gamma_n^*$ can be found to ensure the average power constraint $\sum_{t=0}^{T-1}\alpha_{n}^{(t)*}= \frac{T\bar{P}_n}{P_n^{\mathrm{max}}}$ via the bisection search method.

}

\bibliographystyle{IEEEtran}
\bibliography{IEEEabrv,myref}

\end{document}